\renewcommand{\le}{\leqslant}
\renewcommand{\ge}{\geqslant}
\newcommand{\eps}{\varepsilon}
\newcommand{\emp}{\emptyset}
\newcommand{\Sig}{\Sigma}
\newcommand{\sig}{\sigma}
\newcommand{\noin}{\noindent}
\newcommand{\bi}{\begin{itemize}}
\newcommand{\ei}{\end{itemize}}
\newcommand{\be}{\begin{enumerate}}
\newcommand{\ee}{\end{enumerate}}
\newcommand{\bd}{\begin{description}}
\newcommand{\ed}{\end{description}}
\newcommand{\bq}{\begin{quote}}
\newcommand{\eq}{\end{quote}}
\newcommand{\txt}[1]{\mbox{ #1 }}
\newcommand{\Bsf}{\mathbf{B}_{\mathrm{sf}}}
\newcommand{\Vsf}{\mathbf{T}^{\le 5}} 
\newcommand{\Wsf}{\mathbf{T}^{\ge 6}} 
\newcommand{\tid}{\mbox{{\bf 1}}}
\newcommand{\cD}{{\mathcal D}}
\newcommand{\cT}{{\mathcal T}}
\newcommand{\cW}{{\mathcal W}}
\newcommand{\qedb}{\hfill$\blacksquare$}
\title{Syntactic Complexity of Suffix-Free Languages}
\author{Janusz~Brzozowski\inst{1} \and Marek Szyku{\l}a \inst{2}}
\authorrunning{J. Brzozowski and M. Szyku{\l}a}   
\institute{David R. Cheriton School of Computer Science, University of Waterloo,\\
Waterloo, ON, Canada N2L 3G1\\
\{{\tt brzozo@uwaterloo.ca}\}
\and
Institute of Computer Science, University of Wroc{\l}aw,\\
Joliot-Curie 15, PL-50-383 Wroc{\l}aw, Poland\\
\{{\tt msz@cs.uni.wroc.pl}\}
}
\begin{document}
\maketitle

\begin{abstract}
We solve an open problem concerning syntactic complexity:
We prove that the cardinality of the syntactic semigroup of a suffix-free language with $n$ left quotients (that is, with state complexity $n$) is at most $(n-1)^{n-2}+n-2$ for $n\ge 6$.
Since this bound is known to be reachable, this settles the problem.
We also reduce the alphabet of the witness languages reaching this bound to five letters instead of $n+2$, and show that it cannot be any smaller.
Finally, we prove that the transition semigroup of a minimal deterministic automaton accepting a witness language is unique for each $n$.

\medskip
\noin{\bf Keywords:} regular language, suffix-free, syntactic complexity, transition semigroup, upper bound
\end{abstract}

\section{Introduction}

The \emph{syntactic complexity}~\cite{BrYe11} $\sig(L)$ of a regular language $L$ is the size of its syntactic semigroup~\cite{Pin97}.
This semigroup is isomorphic to the transition semigroup of the quotient automaton $\cD$ (a minimal deterministic finite automaton) accepting the language. 
The number $n$ of states of $\cD$ is the \emph{state complexity} of the language~\cite{Yu01}, and it is the same as the \emph{quotient complexity}~\cite{Brz10} (number of left quotients) of the language. 
The \emph{syntactic complexity of a class} of regular languages is the maximal syntactic complexity of languages in that class expressed as a function of the quotient complexity~$n$.

If $w=uxv$ for some $u,v,x\in\Sigma^*$, then  $u$ is a \emph{prefix} of $w$, $v$ is a \emph{suffix} of $w$ and  
 $x$ is a {\em factor\/} of $w$.
Prefixes and suffixes of $w$ are also factors of $w$.
A~language $L$ is \emph{prefix-free} (respectively, \emph{suffix-free, factor-free}) if $w, u \in L$ and $u$ is a prefix (respectively, \emph{suffix, factor}) of $w$, then  $u=w$.
A language is \emph{bifix-free} if it is both prefix- and suffix-free.
These languages play an important role in coding theory, have applications in such areas as cryptography, data compression, and information transmission, and have been studied extensively; see~\cite{BPR09} for example. 
In particular, suffix-free languages (with the exception of $\{\eps\}$, where $\eps$ is the empty word) 
are suffix codes.
Moreover, suffix-free languages are special cases of suffix-convex languages, where a language is \emph{suffix-convex} if it satisfies the condition that, if a word $w$ and its suffix $u$ are in the language, then so is every suffix of $w$ that has $u$ as a  suffix~\cite{AnBr09,Thi73}. 
We are interested only in regular suffix-free languages.

The syntactic complexity of prefix-free languages was proved to be $n^{n-2}$ in~\cite{BLY12}.
The syntactic complexities of suffix-, bifix-, and factor-free languages were also studied in~\cite{BLY12}, and the following lower bounds were established
$(n-1)^{n-2}+n-2$, 
$(n-1)^{n-3}+(n-2)^{n-3}+ (n-3)2^{n-3}$, and $(n-1)^{n-3}+ (n-3)2^{n-3}+1$, respectively.
It was conjectured that these bounds are also upper bounds; we prove the conjecture for suffix-free languages in this paper.

A much abbreviated version of these results appeared in~\cite{BrSz15}.

\section{Preliminaries}

\subsection{Languages, automata and transformations}
Let $\Sig$ be a finite, non-empty alphabet and let $L\subseteq \Sig^*$ be a language.
The \emph{left quotient} or simply \emph{quotient} of a language $L$ by a word $w\in\Sig^*$ is denoted by $L.w$ and defined by $L.w=\{x\mid wx\in L\}$.
A language is regular if and only if it has a finite number of quotients.
We denote the set of quotients by $K=\{K_0,\dots,K_{n-1}\}$, where $K_0=L=L.\eps$ by convention.
Each quotient $K_i$ can be represented also as $L.w_i$, where $w_i\in\Sig^*$ is such that
$L.w_i=K_i$.

A \emph{deterministic finite automaton (DFA)} is a quintuple
$\cD=(Q, \Sigma, \delta, q_0,F)$, where
$Q$ is a finite non-empty set of \emph{states},
$\Sig$ is a finite non-empty \emph{alphabet},
$\delta\colon Q\times \Sig\to Q$ is the \emph{transition function},
$q_0\in Q$ is the \emph{initial} state, and
$F\subseteq Q$ is the set of \emph{final} states.
We extend $\delta$ to a function $\delta\colon Q\times \Sig^*\to Q$ as usual.

The \emph{quotient DFA} of a regular language $L$ with $n$ quotients is defined by
$\cD=(K, \Sigma, \delta_\cD, K_0,F_\cD)$, where 
$\delta_\cD(K_i,w)=K_j$ if and only if $K_i.w=K_j$, 
and $F_\cD=\{K_i\mid \eps \in K_i\}$.
To simplify the notation, without loss of generality we use the set $Q=\{0,\dots,n-1\}$ of subscripts of quotients as the set of states of $\cD$; then $\cD$ is denoted by
$\cD=(Q, \Sigma, \delta, 0,F)$, where $\delta(i,w)=j$  if $\delta_\cD(K_i,w)=K_j$, and $F$ is the set of subscripts of quotients in $F_\cD$. The quotient corresponding to $q\in Q$ is then $K_q=\{w\mid \delta_\cD(K_q,w)\in F_\cD\}$.
The quotient $K_0=L$ is the \emph{initial} quotient. A quotient is \emph{final} if it contains $\eps$.
A state $q$ is \emph{empty} if its quotient $K_q$ is empty.

The quotient DFA of $L$ is isomorphic to each complete minimal DFA of $L$.
The number of states in the quotient DFA of $L$ (the quotient complexity of $L$) is therefore equal to the state complexity~of~$L$.

In any DFA, each letter $a\in \Sig$ induces a transformation of the set $Q$ of $n$ states.
Let $\cT_{Q}$ be the set of all $n^n$ transformations of $Q$; then $\cT_{Q}$ is a monoid under composition. 
The \emph{image} of $q\in Q$ under transformation $t$ is denoted by $qt$.
If $s,t$ are transformations of $Q$, their composition is denoted $s\circ t$ and defined by
$q(s \circ t)=(qs)t$; the $\circ$ is usually omitted.
The \emph{in-degree} of a state $q$ in a transformation $t$ is the cardinality of the set $\{p \mid pt=q\}$.

The \emph{identity} transformation $\tid$ maps each element to itself.
For $k\ge 2$, a transformation (permutation) $t$ of a set $P=\{q_0,q_1,\ldots,q_{k-1}\} \subseteq Q$ is a \emph{$k$-cycle}
if $q_0t=q_1, q_1t=q_2,\ldots,q_{k-2}t=q_{k-1},q_{k-1}t=q_0$.
A $k$-cycle is denoted by $(q_0,q_1,\ldots,q_{k-1})$.
If a transformation $t$ of $Q$ is a $k$-cycle of some $P \subseteq Q$, we say that $t$ \emph{has a $k$-cycle}.
A~transformation \emph{has a cycle} if it has a $k$-cycle for some $k\ge 2$.
A~2-cycle $(q_0,q_1)$ is called a \emph{transposition}.
A transformation is \emph{unitary} if it changes only one state $p$ to a state $q\neq p$; it is denoted by $(p\to q)$.
A transformation is \emph{constant} if it maps all states to a single state $q$; it is denoted by $(Q\to q)$.

The binary relation $\omega_t$ on $Q \times Q$ is defined as follows: For any $i, j \in Q$, $i \mathbin{\omega_t} j$ if and only if $it^k = jt^\ell$ for some $k, \ell \ge 0$. This is an equivalence relation, and each equivalence class is called an \emph{orbit}~\cite{GaMa09} of $t$. For any $i \in Q$, the orbit of $t$ containing $i$ is denoted by $\omega_t(i)$. 
An orbit contains either exactly once cycle and no fixed points or exactly one fixed point and no cycles. The~set of all orbits of $t$  is a partition of $Q$.

If $w \in \Sig^*$  induces a transformation $t$, we denote this by 
$w\colon t$.
A~transformation mapping $i$ to $q_i$ for $i=0, \dots, n-1$ is sometimes denoted by
$[q_0, \dots,q_{n-1}]$. 
By a slight abuse of notation we sometimes represent the transformation $t$ induced by $w$ by $w$ itself, and write $qw$ instead of $qt$.

The \emph{transition semigroup} of a DFA $\cD=(Q,\Sig,\delta,0, F)$ is the semigroup of transformations
of $Q$ generated by the transformations induced by the letters of $\Sig$.
Since the transition semigroup of a minimal DFA of a language $L$ is isomorphic to the 
syntactic semigroup of $L$~\cite{Pin97}, syntactic complexity is equal to the cardinality 
of the transition semigroup.

\subsection{Suffix-free languages}

For any transformation $t$, consider the sequence $(0,0t,0t^2,\dots)$; we call it the \emph{$0$-path} of $t$.
Since $Q$ is finite, there exist $i,j$ such that $0,0t,\dots,$ $0t^i,0t^{i+1},$ $\dots,0t^{j-1}$ are distinct but $0t^j=0t^i$.
The integer $j-i$ is the \emph{period} of $t$ and if $j-i=1$, $t$ is \emph{initially aperiodic}.

Let $Q=\{0,\ldots,n-1\}$, 
let $\cD_n=(Q, \Sigma, \delta, 0,F)$ be a minimal DFA accepting a language $L$, and let $T(n)$ be its transition semigroup.
The following observations are well known~\cite{BLY12,HaSa09}:

\begin{lemma}
\label{lem:sf}
If $L$ is a suffix-free language, then
\be
\item
There exists $w \in \Sig^*$ such that $L.w=\emp$; hence $\cD_n$ has an empty state, which is state $n-1$ by convention.
\item
For $w,x\in \Sig^+$, if $L.w\neq \emp$, then $L.w\neq L.xw$.
\item
If $L.w\neq \emp$, then $L.w=L$ implies $w=\eps$.
\item
For any $t\in T(n)$, the 0-path of $t$ in $\cD_n$ is aperiodic and ends in $n-1$. 
\ee
\end{lemma}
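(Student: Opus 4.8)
The plan is to derive all four statements directly from the definition of suffix-freeness, exploiting throughout the single observation that prepending a non-empty word $x$ to a word $v$ makes $v$ a suffix of $xv$. I would first prove parts (2) and (3), which are immediate, then deduce part (1) by a pigeonhole argument, and finally obtain part (4) by analysing the $0$-path. The logical order matters: (2) and (3) follow from suffix-freeness alone, so using them to establish (1) and (4) is not circular.

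For part (2), suppose $L.w \neq \emp$ and $L.w = L.xw$ with $w,x \in \Sig^+$. Pick $y \in L.w$, so $wy \in L$; then $y \in L.w = L.xw$ gives $xwy \in L$. Since $wy$ is a suffix of $xwy$ and both belong to $L$, suffix-freeness forces $wy = xwy$, i.e.\ $x = \eps$ (as $x \neq \eps$ is the only obstruction to equality), a contradiction. Part (3) is the boundary version of the same idea: if $L.w = L \neq \emp$ with $w \in \Sig^+$, then any $y \in L$ satisfies $y \in L = L.w$, so $wy \in L$; as $y$ is a suffix of $wy$ and both lie in $L$, suffix-freeness gives $y = wy$, hence $w = \eps$, again a contradiction.

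For part (1), fix any letter $a \in \Sig$ and consider the quotients $L = L.a^0, L.a^1, L.a^2, \dots$. If none of them were empty, they would be pairwise distinct: an equality $L.a^i = L.a^j$ with $0 < i < j$ contradicts (2) (take base $a^i$ and prefix $a^{j-i}$), while $L = L.a^j$ with $j \ge 1$ contradicts (3). Since $L$ has only $n$ quotients, some $L.a^k$ must equal $\emp$; this is the empty state, placed at $n-1$ by convention. (If $L = \emp$ the claim is trivial, as $L = L.\eps$ is already empty.)

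For part (4), let $t \in T(n)$ be induced by a word $w \in \Sig^+$, so that $0t^k$ is the state of $L.w^k$. In the period definition let $i$ be the least repeated index, with $0t^j = 0t^i$ and period $p = j-i$. If $i = 0$ then $L = L.w^j$ with $w^j \in \Sig^+$, contradicting (3) for non-empty $L$; hence $i \ge 1$. Then $L.w^i = L.w^{i+p} = L.(w^p\, w^i)$, and since $i \ge 1$ and $p \ge 1$ both exponents give words in $\Sig^+$, so by (2) this equality forces $L.w^i = \emp$. Because the empty quotient is fixed by every letter, we get $0t^i = n-1$ and $0t^{i+1} = 0t^i$, so $p = 1$; the $0$-path is therefore aperiodic and ends in $n-1$. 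I expect the main obstacle to be the bookkeeping in this last step: one must correctly split on whether the repetition begins at the initial state, and ensure that the concluding application of (2) simultaneously rules out any long period \emph{and} identifies the terminal state as the empty one, rather than merely bounding the period.
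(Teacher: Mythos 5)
Your proof is correct. The paper gives no proof of this lemma---it is cited as well known from~\cite{BLY12,HaSa09}---and your argument is exactly the standard route those sources take: suffix-freeness directly yields (2) and (3), a pigeonhole over the quotients $L.a^k$ produces the empty quotient for (1), and applying (2) to the equality $L.w^i = L.(w^p w^i)$ with $i,p\ge 1$ collapses any repetition on the $0$-path to the empty state $n-1$, forcing period $1$ and giving (4).
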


An (unordered) pair $\{p,q\}$ of distinct states in $Q \setminus \{0,n-1\}$ is \emph{colliding} (or $p$ \emph{collides} with $q$) in $T(n)$ if there is a transformation $t \in T(n)$ such that $0t = p$ and $rt = q$ for some $r \in Q \setminus \{0,n-1\}$. 
A pair of states is \emph{focused} by a transformation $u$ of $Q$ if $u$ maps both states of the pair to a single state $r \not\in \{0,n-1\}$. We then say that $\{p,q\}$ is \emph{focused to state $r$}.
If $L$ is a suffix-free language, then from Lemma~\ref{lem:sf}~(2) it follows that if $\{p,q\}$ is colliding in $T(n)$, there is no transformation $t' \in T(n)$ that focuses $\{p,q\}$.
So colliding states can be mapped to a single state by a transformation in $T(n)$ only if that state is the empty state $n-1$.

\begin{remark}
\label{rem:suffix_small}
If $n=1$, the only suffix-free language is the empty language $\emp$ and $\sig(\emp)=1$.
If $n\ge 2$ and $\Sig=\{a\}$, the language $L=a^{n-2}$ is the only suffix-free language of quotient complexity $n$, and its syntactic complexity is $\sig(L)=n-1$.

Assume now that $|\Sig|\ge 2$.
If $n=2$, the language $L=\eps$ is the only suffix-free language, and $\sig(L)=1$.
If $n=3$,  the tight upper bound on syntactic complexity of suffix-free languages is 3, and the language $L=ab^*$ over $\Sig=\{a,b\}$ meets this bound~\textrm{\cite{BLY12}}.
\qedb
\end{remark}

\subsection{Suffix-free semigroups}

Since the cases where $2\le n\le 3$ were easily resolved, we assume now that $n\ge 4$.
Also, without loss of generality, we assume that $Q=\{0,\ldots,n-1\}$.
A transformation of $Q$ is \emph{suffix-free} if it can belong to the transition semigroup of a minimal DFA accepting a suffix-free language. 

The following set of all suffix-free transformations was defined in~\cite{BLY12}:
For $n \ge 2$ let
\begin{eqnarray*}
\Bsf(n) = \{t \in \cT_Q & \mid & 0 \not\in Qt\text{,  } \, (n-1)t=n-1\text{, }  \txt{and for all} j\ge 1,\\
& & 0t^j = n-1 \txt{or} 0t^j \neq qt^j ~~\forall q, \; 0 < q < n-1\}.
\end{eqnarray*}

Suppose $\cD$ is a minimal DFA accepting a non-empty suffix-free language $L$.

Each transformation in $\Bsf(n)$ satisfies three conditions.
First, no transformation $t$ induced by a word $y$ can map any state $q$ to 0, because then we would have some words $x$ and $z$ such that $0x=q$, $qy=0$, $z\in L$ and $xyz\in L$.
Second, since $\cD$ must have an empty state $n-1$, that state must be mapped to itself by every transformation. 
Third, since state $0t^j$ collides with $qt^j$ for all $j\ge 1$ if $0t^j\neq n-1$, these two states cannot be focused.

It was shown in~\cite{BLY12} that if $\cD$ is a minimal DFA accepting a suffix-free language, then its syntactic semigroup is  contained in $\mathrm{\mathbf{B}_{sf}}(n)$. 
Since the set $\mathrm{\mathbf{B}_{sf}}(n)$ is not a semigroup, we look for largest semigroups contained in $\mathrm{\mathbf{B}_{sf}}(n)$.
Two such semigroups were introduced in~\cite{BLY12}.

The first semigroup is defined as follows: For $n \ge 4$, let
\begin{eqnarray*}
\Vsf(n) = \{t \in \Bsf(n) &\mid& \txt{for all} p, q \in Q \txt{where} p \neq q, \\ 
&& \txt{we have} pt = qt = n-1 \txt{or} pt \neq qt\}.
\end{eqnarray*}

It was shown in~\cite{BrSz15a} that for $n\ge 4$, semigroup $\Vsf(n)$ is generated by the following set of transformations of $Q$:

\bi
\item
${a}\colon (0\to n-1)(1,\dots,n-2)$,
\item
${b}\colon (0\to n-1)(1,2)$, 
\item
for $1\le p \le n-2$, $ c_p\colon (p\to n-1)(0\to p)$.
\ei

\begin{proposition}[{\cite[Proposition 3]{BrSz15a}}]\label{prop:unique_all_colliding}
For $n\ge 4$, $\Vsf(n)$ is the unique maximal semigroup of a suffix-free language in which every two states from $\{1,\dots,n-2\}$ are colliding.
\end{proposition}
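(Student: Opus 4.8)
The plan is to prove two things: first that $\Vsf(n)$ is a semigroup all of whose elements lie in $\Bsf(n)$ and in which every pair of states from $\{1,\dots,n-2\}$ is colliding, and second that it is \emph{maximal} with this property and in fact the \emph{unique} maximal such semigroup. Since the generators are given explicitly, I would not reprove that $\Vsf(n)$ is a semigroup or that it consists of suffix-free transformations (these follow from its definition and from the cited result that the listed transformations generate it); instead I would focus on the collision and maximality claims.

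\medskip

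\noin\textbf{Step 1 (all pairs collide).} First I would verify that in $\Vsf(n)$ every pair $\{p,q\}$ with $p,q\in\{1,\dots,n-2\}$, $p\neq q$, is colliding. By the definition of colliding, it suffices to exhibit, for each such pair, a transformation $t\in\Vsf(n)$ with $0t=p$ and $rt=q$ for some $r\in\{1,\dots,n-2\}$. Using the generator $c_p\colon(p\to n-1)(0\to p)$, we have $0c_p=p$ while $c_p$ fixes every state in $\{1,\dots,n-2\}\setminus\{p\}$; composing with the cyclic generator $a$ to rotate some surviving state onto $q$, one builds the required witness. I would give one clean family of witnesses here and check that each lies in $\Bsf(n)$ and respects the injectivity-off-$(n-1)$ condition defining $\Vsf(n)$.

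\medskip

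\noin\textbf{Step 2 (maximality).} Next I would show $\Vsf(n)$ is maximal among suffix-free semigroups in which all of $\{1,\dots,n-2\}$ mutually collide. Suppose $S$ is such a semigroup. Every $t\in S$ lies in $\Bsf(n)$ by the containment result from~\cite{BLY12}. The key point is that because every pair $\{p,q\}$ is colliding in $S$, Lemma~\ref{lem:sf}~(2) (as exploited in the discussion of focusing) forbids any $t\in S$ from focusing $\{p,q\}$ to a state other than $n-1$; that is, no transformation in $S$ may map two distinct states of $\{1,\dots,n-2\}$ to a common state $\neq n-1$. This is exactly the defining condition of $\Vsf(n)$. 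Hence $S\subseteq\Vsf(n)$, so $\Vsf(n)$ is the largest, hence maximal, such semigroup.

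\medskip

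\noin\textbf{Step 3 (uniqueness).} Uniqueness then falls out of Step 2: any maximal semigroup $S$ of this kind satisfies $S\subseteq\Vsf(n)$ by the containment just proved, and $\Vsf(n)$ itself is such a semigroup by Step 1 together with the fact that it is a semigroup; maximality of $S$ forces $S=\Vsf(n)$. The main obstacle, and where I would spend the most care, is Step 2: one must argue rigorously that \emph{mutual} colliding of all pairs is not merely compatible with but actually \emph{equivalent to} the injectivity-off-$(n-1)$ condition. The forward direction (collisions forbid focusing) uses the suffix-free property through Lemma~\ref{lem:sf}~(2), while one must also confirm that \emph{every} pair being colliding is genuinely realizable inside $\Vsf(n)$ (Step 1), so that the constraint is tight and not vacuously inherited. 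Keeping the quantifier ``for \emph{every} pair'' correctly synchronized between the collision hypothesis and the focusing prohibition is the delicate part.
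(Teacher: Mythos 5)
Your proposal is correct and follows essentially the same three-part structure as the paper's proof: exhibit witnesses showing all pairs of $\{1,\dots,n-2\}$ collide in $\Vsf(n)$, observe that universal collision plus suffix-freeness (via Lemma~\ref{lem:sf}~(2)) forbids focusing any such pair to a state other than $n-1$, and note that this is precisely the condition cutting $\Vsf(n)$ out of $\Bsf(n)$, so any such semigroup $S$ satisfies $S\subseteq\Vsf(n)$. The only cosmetic difference is in Step~1: composing $c_p$ with powers of $a$ is unnecessary, since $c_p$ alone already witnesses the collision of $\{p,q\}$ (it sends $0$ to $p$ and fixes $q$), which is exactly the witness the paper uses.
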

\begin{proof}
For each pair $p,q \in Q \setminus \{0,n-1\}$, $p\neq q$, there is a transformation $t \in \Vsf(n)$ with $0t = p$ and $qt = q$.
Thus all pairs are colliding.
If all pairs are colliding, then for each $p,q \in Q \setminus \{n-1\}$, there is no transformation $t$ with $pt = qt \neq n-1$, for this would violate suffix-freeness.
By definition, $\Vsf(n)$ has all other transformations that are possible for a suffix-free language, and hence is unique.
\end{proof}

The second semigroup from~\cite{BLY12} was defined as follows:
For $n \ge 4$, let 
$$\Wsf(n) = \{ t \in \Bsf(n) \mid 0t = n-1 \txt{or} qt = n-1~~\forall~q, \; 1 \le q \le n-2 \}.$$

The transition semigroup $\Wsf(n)$ has cardinality $(n-1)^{n-2}+n-2$, which is a lower bound on the complexity of suffix-free languages established in~\cite{BLY12} using a witness DFA with an alphabet with $n+2$ letters.
Our first contribution is to simplify the witness of~\cite{BLY12} with the transition semigroup $\Wsf(n)$ by using an alphabet with only five letters, as stated in Definition~\ref{def:sf_wit}. The transitions induced by inputs $a$, $b$, $c$, and $e$ are the same as in~\cite{BLY12}.

\begin{definition}[Suffix-Free Witness]
\label{def:sf_wit}
For $n\ge 4$, 
we define the DFA 
$\cW_n =(Q,\Sig_\cW,\delta_\cW,0,\{1\}),$
where $Q=\{0,\ldots,n-1\}$, $\Sig_\cW=\{a,b,c,d,e\}$, 
and $\delta_\cW$ is defined by the transformations
$a\colon (0\to n-1) (1,\ldots,n-2)$,
$b\colon (0\to n-1) (1,2)$,
$c\colon (0\to n-1) (n-2\to 1)$,
$d\colon (\{0,1\}\to n-1)$,
and $e\colon (Q\setminus \{0\} \to n-1)(0\to 1)$.
For $n=4$, $a$ and $b$ coincide, and we can use $\Sig_\cW=\{b,c,d,e\}$.
\end{definition}
\begin{figure}[ht]
\unitlength 10pt
\begin{center}\begin{picture}(27,13)(0,0)
\gasset{Nh=2.5,Nw=2.5,Nmr=1.25,ELdist=0.3,loopdiam=1.5}
\node(0)(1,7){0}\imark(0)
\node(1)(9,7){1}\rmark(1)
\node(2)(17,7){2}
\node(3)(25,7){3}
\node(4)(13,1){4}
\drawedge(0,1){$e$}
\drawedge[curvedepth=1](1,2){$a,b$}
\drawloop(1){$c$}
\drawedge[curvedepth=1,ELdist=-1](2,1){$b$}
\drawloop(2){$c,d$}
\drawedge(2,3){$a$}
\drawedge[curvedepth=-5,ELdist=-1](3,1){$a,c$}
\drawloop(3){$b,d$}
\drawedge[ELdist=-2.5](0,4){$\Sigma\setminus\{e\}$}
\drawedge(1,4){$d,e$}
\drawedge(2,4){$e$}
\drawedge(3,4){$e$}
\drawloop[loopangle=270,ELpos=25](4){$\Sigma$}
\end{picture}\end{center}
\caption{DFA $\cW_5$.}
\label{fig:witness}
\end{figure}

The structure of $\cW_n$ is illustrated in Figure~\ref{fig:witness} for $n=5$.
We claim that no pair of states from $Q$ is colliding in $S_n$.
If $0t=p\not\in \{0,n-1\}$, then $t$ is not the identity but must be induced by a word of the form $ew$ for some $w\in\Sig^*$. 
Such a word maps every $r\not\in \{0,n-1\}$ to $n-1$.
Hence $q=rt=n-1$, and $p$ and $q$ do not collide.

\begin{proposition}
\label{prop:sf_wit}
For $n\ge 4$ the DFA $\cW_n$ of Definition~\ref{def:sf_wit} is minimal, suffix-free, and its transition semigroup is $\Wsf(n)$ with cardinality $(n-1)^{n-2}+n-2$. In particular, $\Wsf(n)$ contains
(a) all $(n-1)^{n-2}$ transformations that send 0 and $n-1$ to $n-1$ and map 
$Q\setminus \{0,n-1\}$ to $Q\setminus \{0\}$, and 
(b) all $n-2$ transformations that send 0 to a state in $Q\setminus \{0,n-1\}$ and map all the other states to $n-1$.
\end{proposition}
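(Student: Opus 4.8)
The plan is to prove the three assertions—minimality, suffix-freeness, and the identification of the transition semigroup $S_n$ with $\Wsf(n)$—in turn, and then to read off the cardinality by a direct count. Throughout write $P=\{1,\dots,n-2\}=Q\setminus\{0,n-1\}$, so that $n-1$ is the dead state. For \emph{minimality} I would first settle reachability: from $0$ the letter $e$ reaches the final state $1$; the powers of $a$, which acts on $P$ as the $(n-2)$-cycle $(1,\dots,n-2)$, reach every state of $P$ from $1$; and $0a=n-1$. For distinguishability, $n-1$ is the unique dead state and so is separated from every other state (all of which reach the final state $1$); the letter $e$ separates $0$ from each $q\in P$ because $0e=1\in F$ while $qe=n-1$; and for two distinct $p,q\in P$ a suitable power $a^k$ rotating $p$ to $1$ sends $p$ to $F$ and $q$ outside $F$.

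\emph{Suffix-freeness} follows from one structural observation about the five letters: only $e$ maps $0$ to a live state (namely $0e=1$), each of $a,b,c,d$ sends $0$ to the dead state $n-1$, and moreover $e$ sends every state of $Q\setminus\{0\}$ to $n-1$. Since $0$ lies in the image of no transformation it is never re-entered; hence an accepted word must begin with $e$ and can contain no further $e$. Thus every word of $L(\cW_n)$ has the form $ew'$ with $w'\in\{a,b,c,d\}^*$, while every proper suffix of $ew'$ is a suffix of $w'$ and therefore contains no $e$; such a word is rejected (and $\eps\notin L$ because $0\notin F$). This is exactly the suffix-free condition.

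The core is the equality $S_n=\Wsf(n)$. For the inclusion $S_n\subseteq\Wsf(n)$ I would check that each of $a,b,c,d$ is a transformation of type (a) and $e$ one of type (b) in the sense of the proposition, and that $\Wsf(n)$ is closed under composition—a short case analysis according to whether each factor sends $0$ to $n-1$ or collapses all of $P$ to $n-1$. The reverse inclusion is where the work lies, and it splits along the two families. The $n-2$ transformations of type (b) are realized as $ea^{k}$ for $0\le k<n-2$: here $0(ea^k)=1a^k$ ranges over all of $P$ as $k$ varies, while every state of $P$ is sent to $n-1$ by $e$, so these exhaust family (b).

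Family (a) is the main obstacle. Restricted to $P$, the letters $a,b,c$ are respectively an $(n-2)$-cycle, a transposition, and a rank-$(|P|-1)$ map collapsing exactly the pair $\{1,n-2\}$; by the classical generation theorem for the full transformation monoid they generate all of $\cT_P$, i.e.\ every map $P\to P$ (this also covers the degenerate case $n=4$, where $a=b$). The letter $d$ restricted to $P$ sends $1$ to the sink $n-1$ and fixes the rest; conjugating $d$ by permutations from $\langle a,b\rangle=S_P$ yields, for each $i\in P$, the transformation $d_i$ sending $i$ to $n-1$ and fixing $P\setminus\{i\}$, and composing these gives for every $S\subseteq P$ a transformation $d_S$ sending $S$ to $n-1$ and fixing $P\setminus S$. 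Given an arbitrary type-(a) transformation $f$, set $S=\{q\in P\mid qf=n-1\}$ and choose $h\in\cT_P$ agreeing with $f$ on $P\setminus S$; then $f=d_S\,h\in S_n$, which recovers all $(n-1)^{n-2}$ maps $P\to P\cup\{n-1\}$. The two families are disjoint (they differ on the image of $0$) and together exhaust $\Wsf(n)$, whence $|S_n|=(n-1)^{n-2}+(n-2)$. The only delicate points I anticipate are the bookkeeping of composition order in the conjugation step and the precise invocation of the $\cT_k$-generation theorem.
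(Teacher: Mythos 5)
Your proposal is correct and follows essentially the same route as the paper: the paper likewise splits $\Wsf(n)$ into the type-(a) maps, generated by $a,b,c$ (which give all transformations of the middle states) together with $d$, and the type-(b) maps, realized as $ea^i$ for $0\le i\le n-3$. The only difference is one of detail, not of method: where the paper cites~\cite{BLY12} for minimality and suffix-freeness and asserts the generation facts, you supply the arguments yourself (the $e w'$ normal form for accepted words, the classical generation theorem for $\cT_P$ via cycle, transposition, and the rank-reducing $c$, and the conjugation of $d$ to obtain $(S\to n-1)$ on the middle states), all of which check out, including the degenerate case $n=4$.
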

\begin{proof}
The proof of minimality and suffix-freeness is the same as in~\cite{BLY12}.
Note that $\Wsf(n)$ contains only transformations of type (a) and (b) and it contains all such transformations.
The transformations induced by $a$, $b$, and $c$ restricted to 
$Q\setminus \{0,n-1\}$ generate all transformations of the middle $n-2$ states. 
Together with $d$, they generate all transformations of $Q$ that send 0 to $n-1$, fix $n-1$, and send a state in $Q\setminus\{0, n-1\}$ to $Q\setminus \{0\}$. 
Also, for $0\le i\le n-3$, words $ea^i$ send $0$ to some $p\not\in \{0,n-1\}$ and map all the other states to $n-1$.
Hence the transition semigroup of $\cW_n$ is $\Wsf(n)$ and the proposition holds.
\end{proof}

\begin{proposition}\label{prop:unique_no_colliding}
For $n \ge 4$, $\Wsf(n)$ is the unique maximal semigroup of a suffix-free language in which no two states from $\{1,\dots,n-2\}$ are colliding.
\end{proposition}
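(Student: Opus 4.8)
The plan is to follow the pattern of the proof of Proposition~\ref{prop:unique_all_colliding} and establish something slightly stronger than stated: that $\Wsf(n)$ is not merely \emph{a} maximal semigroup of a suffix-free language with no colliding middle pairs, but the \emph{greatest} one, in the sense that every semigroup of a suffix-free language in which no two states of $\{1,\dots,n-2\}$ collide is contained in $\Wsf(n)$. Since $\Wsf(n)$ is itself such a semigroup, being the greatest element makes it the unique maximal one.

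First I would record the two facts that qualify $\Wsf(n)$ as a member of the family under consideration. By Proposition~\ref{prop:sf_wit}, $\Wsf(n)$ is the transition semigroup of the suffix-free DFA $\cW_n$, so it is indeed the syntactic semigroup of a suffix-free language; and in the paragraph immediately preceding that proposition we already verified that no pair of states is colliding in it. Hence $\Wsf(n)$ belongs to the class of semigroups over which we are maximizing.

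The heart of the argument is the containment. Let $S$ be any semigroup of a suffix-free language in which no two states of $\{1,\dots,n-2\}$ are colliding; by the result of~\cite{BLY12} recalled above we have $S\subseteq\Bsf(n)$. The key step is to reinterpret the global non-colliding hypothesis as a pointwise condition: since a pair becomes colliding the moment a single transformation witnesses it, the absence of colliding middle pairs in $S$ is equivalent to the assertion that no individual $t\in S$ creates a collision. I would then take an arbitrary $t\in S$ and show $t\in\Wsf(n)$. If $0t=n-1$ this is immediate, so assume $0t=p$ with $p\in\{1,\dots,n-2\}$ (note $0\notin Qt$ forces $0t\neq 0$). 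For each middle state $r$ I would rule out every alternative to $rt=n-1$: the value $0$ is excluded because $0\notin Qt$; the value $p$ is excluded by the $\Bsf(n)$ clause giving $0t\neq rt$ for middle $r$ whenever $0t\neq n-1$; and any other middle value $q\neq p$ would exhibit $\{p,q\}$ as colliding (via $0t=p$, $rt=q$), contrary to hypothesis. Thus every middle state maps to $n-1$, which is exactly the defining condition of $\Wsf(n)$, so $t\in\Wsf(n)$ and hence $S\subseteq\Wsf(n)$.

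Combining the two parts, $\Wsf(n)$ contains every semigroup of a suffix-free language with no colliding middle pairs and is itself one of them, so it is the unique maximal such semigroup. The only genuine subtlety, and the step I would flag as the main obstacle, is the case analysis in the previous paragraph: one must invoke \emph{both} clauses of the $\Bsf(n)$ definition (the image condition $0\notin Qt$ and the non-focusing condition specialized to $j=1$) together with the non-colliding hypothesis in order to squeeze each middle image down to $n-1$. Once that reduction to a pointwise statement is made, the remainder is routine bookkeeping.
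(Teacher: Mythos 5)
Your proof is correct and takes essentially the same route as the paper's: show that $\Wsf(n)$ itself has no colliding middle pairs (every $t\in\Wsf(n)$ has $0t=n-1$ or maps all of $Q\setminus\{0\}$ to $n-1$), and that every suffix-free semigroup without colliding middle pairs is contained in $\Wsf(n)$, which makes it the greatest and hence unique maximal one. Your pointwise case analysis inside $\Bsf(n)$ simply spells out what the paper compresses into ``by Proposition~\ref{prop:sf_wit} all such transformations are in $\Wsf(n)$'' (and later re-derives in detail as Lemma~\ref{lem:colliding0}).
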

\begin{proof}
Since $0t = n-1$ or $qt = n-1$ for all $q \in Q \setminus \{0\}$, no two states are colliding.
%
By Proposition~\ref{prop:sf_wit} all such transformations are in $\Wsf(n)$.
\end{proof}

If $n=4$ and $n=5$, the tight upper bounds on the size of suffix-free transition semigroups are 13, and 73, respectively~\cite{BLY12}, and these bounds are met by $\Vsf(n)$.
It was shown in~\cite{BLY12} that there is a suffix-free witness DFA with $n$ states and an alphabet of size $n+2$ that meets the bound $(n-1)^{n-2}+n-2$ for $n\ge 4$, and the transition semigroup of this DFA is $\Wsf(n)$. For $n=4$ and $n=5$, these bounds are 11 and 67, and hence are smaller than the sizes of $\Vsf(4)$ and $\Vsf(5)$. However, for $n\ge 6$, $(n-1)^{n-2}+n-2$ is the largest known lower bound, and it is met by $\Wsf(n)$.
The remainder of this paper is devoted to proving that $(n-1)^{n-2}+n-2$ is also an upper bound.

\section{Upper bound for suffix-free languages}

Our main result shows that the lower bound $(n-1)^{n-2}+n-2$ on the syntactic complexity of suffix-free languages is also an upper bound for $n \ge 7$.
Our approach is as follows: We consider a minimal DFA  $\cD_n=(Q, \Sigma, \delta, 0,F)$, where $Q=\{0,\ldots,n-1\}$, of an arbitrary suffix-free language with $n$ quotients and let $T(n)$ be the transition semigroup of $\cD_n$. 
We also deal with the witness DFA $\cW_n =(Q,\Sig_\cW,\delta_\cW,0,\{1\})$ of Definition~\ref{def:sf_wit} that has the same state set as $\cD_n$ and whose transition semigroup is $\Wsf(n)$. We shall show that there is an injective mapping $\varphi \colon T(n)\to\Wsf(n)$, and this will prove that $|T(n)|\le |\Wsf(n)|$.

A note about terminology may be helpful to the reader. The semigroups $T(n)$ and $\Wsf(n)$ share the set $Q$. When we say that a pair of states from $Q$ is \emph{colliding} we mean that it is colliding in $T(n)$; there is no room for confusion because no pair of states is colliding in $\Wsf(n)$. Since we are dealing with suffix-free languages, a transformation that focuses a colliding pair cannot belong to $T(n)$.

\begin{theorem}[Suffix-Free Languages]\label{thm:suffix-free_upper-bound}
For $n\ge 6$ the syntactic complexity of the class of suffix-free languages with $n$ quotients is $(n-1)^{n-2}+n-2$.
\end{theorem}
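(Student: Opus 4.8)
The lower bound is already in hand: by Proposition~\ref{prop:sf_wit} the witness $\cW_n$ is a minimal suffix-free DFA whose transition semigroup $\Wsf(n)$ has exactly $(n-1)^{n-2}+n-2$ elements. So it remains to establish the matching upper bound, namely that every minimal DFA $\cD_n$ of a suffix-free language satisfies $|T(n)|\le (n-1)^{n-2}+n-2$. The plan is to construct an injective map $\varphi\colon T(n)\to\Wsf(n)$; since both semigroups act on the same state set $Q$, one can hope to define $\varphi$ by only lightly editing each transformation, and the inequality $|T(n)|\le|\Wsf(n)|$ follows at once.

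Every $t\in T(n)$ fixes $n-1$ and sends nothing to $0$, since $T(n)\subseteq\Bsf(n)$ (Lemma~\ref{lem:sf} and the definition of $\Bsf(n)$); thus $t$ maps $Q$ into $Q\setminus\{0\}$. I would first split $T(n)$ according to the image of the initial state. If $0t=n-1$, then $t$ sends both $0$ and $n-1$ to $n-1$ and maps every middle state into $Q\setminus\{0\}$; such $t$ has exactly the shape of the type-(a) transformations of Proposition~\ref{prop:sf_wit}, hence already lies in $\Wsf(n)$, and I set $\varphi(t)=t$. Call this block $A$. The entire difficulty is concentrated in the complementary block $B=\{t\in T(n)\mid 0t\in\{1,\dots,n-2\}\}$, which must be injected into the part of $\Wsf(n)$ not yet used: the $(n-1)^{n-2}-|A|$ type-(a) transformations outside $A$, together with the $n-2$ type-(b) transformations.

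The engine of the argument is the collision/focusing dichotomy. If $t\in B$ has $0t=p$, then for every middle state $q$ with $qt\notin\{p,n-1\}$ the pair $\{p,qt\}$ is colliding, and by suffix-freeness no transformation of $T(n)$ may focus a colliding pair onto a middle state. Consequently every colliding pair forbids an entire family of type-(a) transformations from belonging to $T(n)$ — all those focusing that pair onto a middle state — and each such forbidden transformation is an unused slot available as a target for $\varphi$. The strategy is therefore to charge the elements of $B$ to these vacated slots, encoding in each target both the initial image $p=0t$ and the action of $t$ on $\{1,\dots,n-1\}$ in a recoverable way; the genuinely trivial members of $B$, those collapsing all middle states to $n-1$, are the $n-2$ transformations earmarked for the type-(b) targets.

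The main obstacle is making this charging simultaneously well-defined and injective, and then verifying that the supply of vacated slots never runs short. This demands structural lemmas pinning down the possible shapes of a transformation with $0t\ne n-1$: one must control how many distinct $t\in B$ can share a given action on $\{1,\dots,n-1\}$ (these differ only in $p=0t$ and so must be spread across distinct targets), and one must show that the collisions such transformations create free up at least as many type-(a) slots as are consumed. The quantitative comparison between the number of forbidden type-(a) transformations and $|B|$ is precisely where the hypothesis enters: for $n\le 5$ the extremal semigroup is instead $\Vsf(n)$, in which \emph{all} pairs collide, and the inequality reverses, so those cases lie outside this theorem and are settled in Remark~\ref{rem:suffix_small} and the cited prior work; the regime $n\ge 6$ is exactly where $\Wsf(n)$ dominates, with the boundary value $n=6$ confirmed by direct comparison against $\Vsf(n)$.
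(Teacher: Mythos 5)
Your proposal sets up exactly the paper's framework---the lower bound from Proposition~\ref{prop:sf_wit}, then an injection $\varphi\colon T(n)\to\Wsf(n)$ whose free targets are transformations of $\Wsf(n)$ that focus a colliding pair and hence cannot lie in $T(n)$---but it stops precisely where the theorem's actual content begins. The whole difficulty is the explicit construction of $\varphi$ on the block of transformations $t$ with $0t=p\in\{1,\dots,n-2\}$ and some middle state not mapped to $n-1$, and you leave this to unspecified ``structural lemmas.'' A loose charging or counting of ``vacated slots'' cannot work on its own: $\varphi(t)$ must encode, recoverably, not just $p$ but the entire $0$-path $0,p,pt,\dots,pt^k,n-1$, since $t$ is determined by $p$ together with its action on $Q\setminus\{0\}$ and the edit producing $\varphi(t)$ necessarily destroys part of that action. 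The paper achieves this through eleven disjoint cases (Cases~2--12), each performing a different local surgery on $t$ and each separated from all the others by explicit invariants of the image $s$: presence or absence of a cycle, whether the focused pair is sent to the minimal state of a cycle, fixed points of in-degree $2$ versus $\ge 3$, the least $i$ with $ps^i=n-1$ being $2$, $3$, or $4$, which orbit carries the focused pair, and order conventions (e.g.\ $p<r$ versus $p>r$ encoded in the choice of $q_1,q_2$, or in adding a $2$-cycle on two spare states). Establishing that each case admits an inverse and that the twelve images are pairwise disjoint is the mathematical substance of the theorem, and your sketch gives no construction and no reason to believe a simpler scheme suffices.

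Your disposal of $n=6$ is also wrong. You say the boundary case is ``confirmed by direct comparison against $\Vsf(n)$,'' but comparing $|\Wsf(6)|$ with $|\Vsf(6)|$ only rules out one competing candidate; the upper bound requires that \emph{no} transition semigroup of a six-state suffix-free DFA exceeds $(n-1)^{n-2}+n-2$. The paper instead cites the proof of the $n=6$ case in~\cite{BLY12} and runs the injection only for $n\ge 7$. That restriction is not cosmetic: Cases~11 and~12 need two spare states $r_1,r_2$ outside $\{0,p,q,r,n-1\}$ (respectively $\{0,p,f,n-1\}$) on which to install a $2$-cycle marker, and such states need not exist when $n=6$. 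So as written your argument neither proves the upper bound at $n=6$ nor delivers the injection for $n\ge 7$.
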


\begin{proof}
The case $n=6$ has been proved in~\cite{BLY12}; hence assume that $n\ge 7$.
In~\cite{BLY12} and in Proposition~\ref{prop:sf_wit} it was  shown that $(n-1)^{n-2}+n-2$ is a lower bound for $n\ge 7$; hence it remains to prove that it is also an upper bound, and we do this here.
We have the following cases:

\noindent{\hglue 15pt}
{\bf Case 1:} $t \in \Wsf(n)$.\\
Let $\varphi(t) = t$; obviously $\varphi$ is injective.


\noindent{\hglue 15pt}
{\bf Case 2:} $t \not\in \Wsf(n)$, and $t$ has a cycle.\\
In this case and in all the following ten cases let $p = 0t$. By Lemma~\ref{lem:sf} (4) we have the chain
$$0 \stackrel{t}{\rightarrow} p \stackrel{t}{\rightarrow} pt \stackrel{t}{\rightarrow} \dots \stackrel{t}{\rightarrow} pt^k \stackrel{t}{\rightarrow} n-1,$$
where $k \ge 0$. Observe that pairs $\{pt^i,pt^j\}$ for $0 \le i<j \le k$ are colliding, since transformation $t^{i+1}$ maps $0$ to $pt^i$ and $pt^{j-i-1}$ to $pt^j$.
Also, $p$ collides with any state from a cycle of $t$ and any fixed point of $t$ other than $n-1$.

Let $r$ be minimal among the states that appear in cycles of $t$, that is,
$$r = \min\{q\in Q \mid q\text{ is in a cycle of }t\}.$$
Let $s$ be the transformation illustrated in Fig.~\ref{fig:case2} and defined by
\begin{center}
$0 s = n-1, \quad p s = r, \quad (pt^i) s = pt^{i-1} \text{ for } 1 \le i \le k,$

$q s = q t \text{ for the other states } q \in Q.$
\end{center}
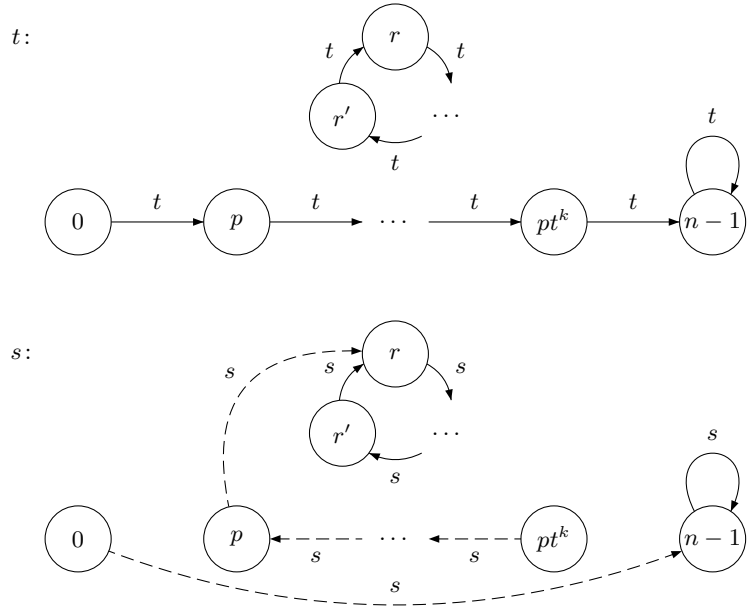
\begin{figure}[ht]
\unitlength 10pt\footnotesize
\begin{center}\begin{picture}(28,24)(0,-1)
\gasset{Nh=2.5,Nw=2.5,Nmr=1.25,ELdist=0.5,loopdiam=2}
\node[Nframe=n](name)(0,21){$t\colon$}
\node(0)(2,14){0}
\node(p)(8,14){$p$}
\node[Nframe=n](pdots)(14,14){$\dots$}
\node(pt^k)(20,14){$pt^k$}
\node(n-1)(26,14){$n-1$}
\node(rt^-1)(12,18){$r'$}
\node(r)(14,21){$r$}
\node[Nframe=n](rdots)(16,18){$\dots$}
\drawedge(0,p){$t$}
\drawedge(p,pdots){$t$}
\drawedge(pdots,pt^k){$t$}
\drawedge(pt^k,n-1){$t$}
\drawloop[loopangle=90](n-1){$t$}
\drawedge[curvedepth=1](rt^-1,r){$t$}
\drawedge[curvedepth=1](r,rdots){$t$}
\drawedge[curvedepth=1](rdots,rt^-1){$t$}

\node[Nframe=n](name)(0,9){$s\colon$}
\node(0')(2,2){0}
\node(p')(8,2){$p$}
\node[Nframe=n](pdots')(14,2){$\dots$}
\node(pt^k')(20,2){$pt^k$}
\node(n-1')(26,2){$n-1$}
\node(rt^-1')(12,6){$r'$}
\node(r')(14,9){$r$}
\node[Nframe=n](rdots')(16,6){$\dots$}
\drawedge[curvedepth=-2.5,dash={.5 .25}{.25}](0',n-1'){$s$}
\drawedge[dash={.5 .25}{.25}](pdots',p'){$s$}
\drawedge[dash={.5 .25}{.25}](pt^k',pdots'){$s$}
\drawedge[curvedepth=3.5,dash={.5 .25}{.25}](p',r'){$s$}
\drawedge[curvedepth=1](rt^-1',r'){$s$}
\drawedge[curvedepth=1](r',rdots'){$s$}
\drawedge[curvedepth=1](rdots',rt^-1'){$s$}
\drawloop[loopangle=90](n-1'){$s$}
\end{picture}\end{center}
\caption{Case~2 in the proof of Theorem~\ref{thm:suffix-free_upper-bound}.}
\label{fig:case2}
\end{figure}

By Proposition~\ref{prop:sf_wit},  $\varphi(t)=s$ is in $\Wsf(n)$, since it maps 0 to $n-1$,  fixes $n-1$, and does not map any states to 0.
 Note that the sets of cyclic states in both $t$ and $s$ are the same.
Let $r'$ be the state from the cycle  of $t$  such that $r't = r$; then transformation $s$ has the following properties:
\bd
\item[(a)] 
Since $p$ collides with any state in a cycle of $t$, $\{p,r'\}$ is a colliding pair focused by $s$ to state $r$ in the cycle. 
Moreover, if $q'$ is a state in a cycle of $s$, and $\{q,q'\}$ is colliding and focused by $s$ to a state in a cycle, then that state must be $r$ (the minimal state in the cycles of $s$), $q$ must be $p$, and $q'$ must be $r'$.

This follows from the definition of $s$. Since $s$ differs from $t$ only in the mapping of states $pt^i$ and $0$, any colliding pair focused by $s$ contains $pt^i$ for some $i$, $0 \le i \le k$. Only $p$ is mapped to $r$, which is in a cycle of $t$, and $r'$ is the only state in that cycle that is mapped to $r$.

\item[(b]) For each $i$ with $1 \le i < k$, there is precisely one state $q$ colliding with $pt^{i-1}$ and mapped by $s$ to $pt^i$, and that state is $q=pt^{i+1}$.

Clearly $q=pt^{i+1}$ satisfies this condition. Suppose that $q \neq pt^{i+1}$. Since $pt^{i+1}$ is the only state mapped to $pt^i$ by $s$ and not by $t$, it follows that $qt = qs = pt^i$. So $q$ and $pt^{i-1}$ are focused to $pt^i$ by $t$; since they collide, this is a contradiction.

\item[(c)] Every focused colliding pair consists of states from the orbit of $p$.

This follows from the fact that all the states except $0$ that are mapped by $s$ differently than by $t$ belong to the orbit of $p$.

\item[(d)] $s$ has a cycle.
\ed
\smallskip

From~(a), $s \not\in T(n)$ and so  $s$ is different from the transformations of Case~1.
\smallskip

Given a transformation $s$ from this case we will construct a unique $t$ that results in $s$ when the definition of $s$ given above is applied. This will show that our mapping $\varphi$ has an inverse, and so is injective.
From~(a) there is the unique colliding pair focused to a state in a cycle. 
Moreover, one its states, say $p$, is not in this cycle and another one, say $r'$, is in this cycle. It follows that $0t = p$. Since there is no state $q \neq 0$ such that $qt=p$, the only state mapped to $p$ by $s$ is $pt$. From~(b) for $i=1,\ldots,k-1$ state $pt^{i+1}$ is uniquely determined. Finally, for $i=k$ there is no state colliding with $pt^{k-1}$ and mapped to $pt^k$; so $pt^{k+1} = n-1$. Since the other transitions in $s$ are defined exactly as in $t$, this procedure defines the inverse function $\varphi^{-1}$ for the transformations of this case.


\noindent{\hglue 15pt}
{\bf Case 3:} $t \not\in \Wsf(n)$, $t$ has no cycles, but $pt \neq n-1$.\\
Let $s$ be the transformation illustrated in Fig.~\ref{fig:case3} and defined by
\begin{center}
$0 s = n-1, \quad p s = p, \quad (pt^i) s = pt^{i-1} \text{ for } 1 \le i \le k,$

$q s = q t \text{ for the other states } q \in Q.$
\end{center}
\begin{figure}[ht]
\unitlength 10pt\footnotesize
\begin{center}\begin{picture}(28,14)(0,-1)
\gasset{Nh=2.5,Nw=2.5,Nmr=1.25,ELdist=0.5,loopdiam=2}
\node[Nframe=n](name)(0,11){$t\colon$}
\node(0)(2,9){0}
\node(p)(8,9){$p$}
\node[Nframe=n](pdots)(14,9){$\dots$}
\node(pt^k)(20,9){$pt^k$}
\node(n-1)(26,9){$n-1$}
\drawedge(0,p){$t$}
\drawedge(p,pdots){$t$}
\drawedge(pdots,pt^k){$t$}
\drawedge(pt^k,n-1){$t$}
\drawloop[loopangle=90](n-1){$t$}

\node[Nframe=n](name)(0,4){$s\colon$}
\node(0')(2,2){0}
\node(p')(8,2){$p$}
\node[Nframe=n](pdots')(14,2){$\dots$}
\node(pt^k')(20,2){$pt^k$}
\node(n-1')(26,2){$n-1$}
\drawedge[curvedepth=-2.5,dash={.5 .25}{.25}](0',n-1'){$s$}
\drawedge[dash={.5 .25}{.25}](pdots',p'){$s$}
\drawedge[dash={.5 .25}{.25}](pt^k',pdots'){$s$}
\drawloop[loopangle=90](p'){$s$}
\drawloop[loopangle=90](n-1'){$s$}
\end{picture}\end{center}
\caption{Case~3 in the proof of Theorem~\ref{thm:suffix-free_upper-bound}.}
\label{fig:case3}
\end{figure}
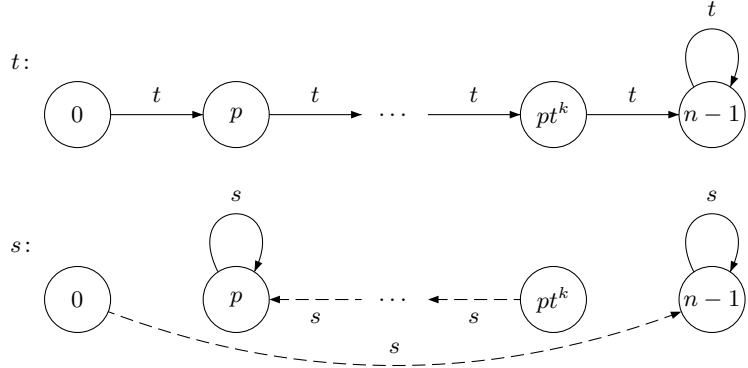

Observe that $s$ has the following properties:
\bd
\item[(a)] $\{p,pt\}$ is the only colliding pair focused by $s$ to a fixed point. Moreover the fixed point is contained in the pair, and has in-degree 2.

This follows from the definition of $s$, since any colliding pair focused by $s$ contains $pt^i$ ($0 \le i \le k$), and only $pt$ is mapped to $p$, which is a fixed point.
Also, no state except $0$ can be mapped to $p$ by $t$ because this would violate suffix-freeness; so only $p$ and $pt$ are mapped by $s$ to $p$, and $p$ has in-degree 2.

\item[(b)] For each $i$ with $1 \le i < k$, there is precisely one state $q$ colliding with $pt^{i-1}$ and mapped to $pt^i$, and that state is $q=pt^{i+1}$.

This follows exactly like Property~(b) from Case~2.

\item[(c)] Every colliding pair focused by $s$ consists of states from the orbit of $p$.

This follows exactly like Property~(c) from Case~2.

\item[(d)] $s$ does not have a cycle, but has a fixed point $f \neq n-1$ with in-degree $\ge 2$, which is $p$.
\ed

From~(a), $s \not\in T(n)$ and so $s$ is different from the transformations of Case~1. Here $s$ does not have a cycle in contrast with the transformations of Case~2.

As before, $s$ uniquely defines the transformation $t$ from which it is obtained: From~(a) there is the unique colliding pair $\{p,pt\}$ focused to the fixed point $p$. Thus $0t = p$. 
Then, as in Case~2, for $i=1,\ldots,k-1$ state $pt^{i+1}$ is uniquely defined, and $pt^k = n-1$. Since the other transitions in $s$ are defined exactly as in $t$, this procedure yields the inverse function $\varphi^{-1}$ for  this case.

\noindent{\hglue 15pt}
{\bf Case 4:} $t$ does not fit in any of the previous cases, but there is a fixed point $r \in Q \setminus \{0,n-1\}$ with in-degree $\ge 2$.\\
Let $s$ be the transformation illustrated in Fig.~\ref{fig:case4} and defined by
\begin{center}
$0 s = n-1, \quad p s = r,$

$q s = q t \text{ for the other states } q \in Q.$
\end{center}
\begin{figure}[ht]
\unitlength 10pt\footnotesize
\begin{center}\begin{picture}(28,19)(0,-1)
\gasset{Nh=2.5,Nw=2.5,Nmr=1.25,ELdist=0.5,loopdiam=2}
\node[Nframe=n](name)(0,16){$t\colon$}
\node(0)(2,11){0}
\node(p)(14,11){$p$}
\node(n-1)(26,11){$n-1$}
\node(q)(9,14){}
\node(r)(19,14){$r$}
\drawedge(0,p){$t$}
\drawedge(p,n-1){$t$}
\drawloop[loopangle=90](n-1){$t$}
\drawedge(q,r){$t$}
\drawloop[loopangle=90](r){$t$}

\node[Nframe=n](name)(0,7){$s\colon$}
\node(0')(2,2){0}
\node(p')(14,2){$p$}
\node(n-1')(26,2){$n-1$}
\node(q')(9,5){}
\node(r')(19,5){$r$}
\drawedge[curvedepth=-2.5,dash={.5 .25}{.25},ELdist=-1](0',n-1'){$s$}
\drawedge[dash={.5 .25}{.25},curvedepth=-0.5](p',r'){$s$}
\drawloop[loopangle=90](n-1'){$s$}
\drawedge(q',r'){$s$}
\drawloop[loopangle=90](r'){$s$}
\end{picture}\end{center}
\caption{Case~4 in the proof of Theorem~\ref{thm:suffix-free_upper-bound}.}
\label{fig:case4}
\end{figure}
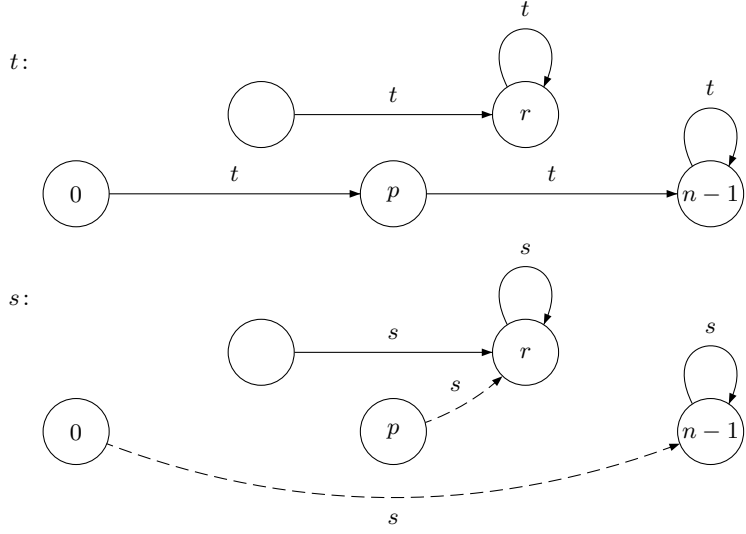

Observe that $s$ has the following properties:
\bd
\item[(a)] $\{p,r\}$ is the only colliding pair focused by $s$ to a fixed point, where the fixed point is contained in the pair. Moreover the fixed point has in-degree at least 3.

Since $s$ differs from $t$ only by the mapping of states $0$ and $p$, it follows that all focused colliding pairs contain $p$. Since $p$ is mapped to $r$, the second state in the pair must be the fixed point $r$. Since $r$ has in-degree at least $2$ in $t$, and $s$ additionally maps $p$ to $r$, $r$ has in-degree at least 3.

\item[(b)] $s$ does not have a cycle, but has a fixed point $\neq n-1$ with in-degree $\ge 3$, which is $r$.
\ed

From~(a) we have $s \not\in T(n)$, and so $s$ is different from the transformations of Case~1. Here $s$ does not have a cycle in contrast with the transformations of Case~2. Also from~(a) we know that the fixed point in the distinguished colliding pair has in-degree $\ge 3$, whereas in Case~3 it has  in-degree~2.

From~(a) we see that the colliding pair $\{p,r\}$ in which $r$ is a fixed point and $p$ is not is uniquely defined.
Hence $0t=p$ and $pt=n-1$, and $t$ is again uniquely defined from $s$.

\noindent{\hglue 15pt}
{\bf Case 5:} $t$ does not fit in any of the previous cases, but there is a state $r$ with in-degree $\ge 1$ that is not a fixed point and satisfies $rt \neq n-1$.

Since there are no fixed points in $s$ with in-degree $\ge 2$ other than $n-1$, and there are no cycles, it follows that $r$ belongs to the orbit of $n-1$. Hence we can choose $r$ such that $rt \neq n-1$ and $rt^2 = n-1$.

Let $s$ be the transformation illustrated in Fig.~\ref{fig:case5} and defined by
\begin{center}
$0 s = n-1, \quad p s = r t,$

$q s = q t \text{ for the other states } q \in Q.$
\end{center}
\begin{figure}[ht]
\unitlength 10pt\footnotesize
\begin{center}\begin{picture}(28,19)(0,-1)
\gasset{Nh=2.5,Nw=2.5,Nmr=1.25,ELdist=0.5,loopdiam=2}
\node[Nframe=n](name)(0,16){$t\colon$}
\node(0)(2,11){0}
\node(p)(14,11){$p$}
\node(n-1)(26,11){$n-1$}
\node(q)(11,14){}
\node(r)(17,14){$r$}
\node(rt)(22,14){$rt$}
\drawedge(0,p){$t$}
\drawedge(p,n-1){$t$}
\drawloop[loopangle=90](n-1){$t$}
\drawedge(q,r){$t$}
\drawedge(r,rt){$t$}
\drawedge(rt,n-1){$t$}

\node[Nframe=n](name)(0,7){$s\colon$}
\node(0')(2,2){0}
\node(p')(14,2){$p$}
\node(n-1')(26,2){$n-1$}
\node(q')(11,5){}
\node(r')(17,5){$r$}
\node(rt')(22,5){$rt$}
\drawedge[curvedepth=-2.5,dash={.5 .25}{.25},ELdist=-1](0',n-1'){$s$}
\drawedge[dash={.5 .25}{.25},ELdist=-1,curvedepth=-1](p',rt'){$s$}
\drawloop[loopangle=90](n-1'){$s$}
\drawedge(q',r'){$s$}
\drawedge(r',rt'){$s$}
\drawedge(rt',n-1'){$s$}
\end{picture}\end{center}
\caption{Case~5 in the proof of Theorem~\ref{thm:suffix-free_upper-bound}.}
\label{fig:case5}
\end{figure}
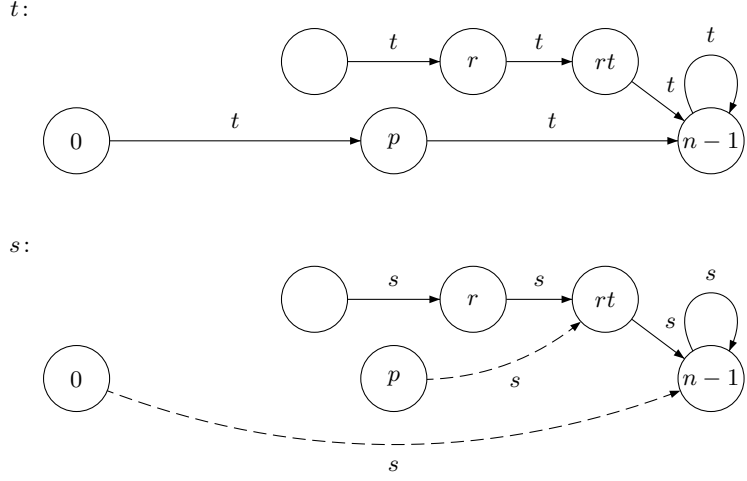

Observe that $s$ has the following properties:
\bd
\item[(a)] All focused colliding pairs contain $p$, and the second state from such a pair has in-degree $\ge 1$.

This follows since $s$ differs from $t$ only in the mapping of $0$ and $p$.

\item[(b)] The smallest $i$ with $ps^i = n-1$ is $2$.

\item[(c)] $s$ has neither a cycle nor a fixed point with in-degree $\ge 2$ other than $n-1$.
\ed

Note that $p$ and $r$ collide. Since $\{p,r\}$ is focused to $r t$, we have $s \not\in T(n)$ and so $s$ is different from the transformations of Case~1. Here $s$ does not have a cycle in contrast with the transformations of Case~2. Also $s$ does not have a fixed point other than $n-1$, and so is different from the transformations of Cases~3 and~4.

From~(a) all focused colliding pairs contain $p$. If there are two or more such pairs, $p$ is the only state in their intersection. If there is only one such pair, then it must be $\{p,r\}$, and $p$ is uniquely determined, since it has in-degree 0 and $r$ has in-degree $\ge 1$.
Hence $0t=p$ and $pt=n-1$, and again $t$ is uniquely defined from $s$.

\noindent{\hglue 15pt}
{\bf Case 6:} $t$ does not fit in any of the previous cases, but there is a state $r \in Q \setminus \{0,n-1\}$ with in-degree $\ge 2$.\\
Clearly $r \neq p$, since the in-degree of $p$ is 1. Also $rt = n-1$, as otherwise $t$ would fit in Case~5.

Let $R = \{r' \in Q \mid r't = r\}$; then $|R| \ge 2$.
We consider the following two sub-cases. If $p < r$, let $q_1$ be the smallest state in $R$ and let $q_2$ be the second smallest state; so $q_1 < q_2$. If $p > r$, let $q_1$ be the second smallest state in $R$, and let $q_2$ be the smallest state; so $q_2 < q_1$.

Let $s$ be the transformation illustrated in Fig.~\ref{fig:case6} and defined by
\begin{center}
$0 s = n-1, \quad p s = q_1, \quad r s = q_1, \quad q_1 s = q_2, \quad q_2 s = n-1,$

$q s = q t \text{ for the other states } q \in Q.$
\end{center}
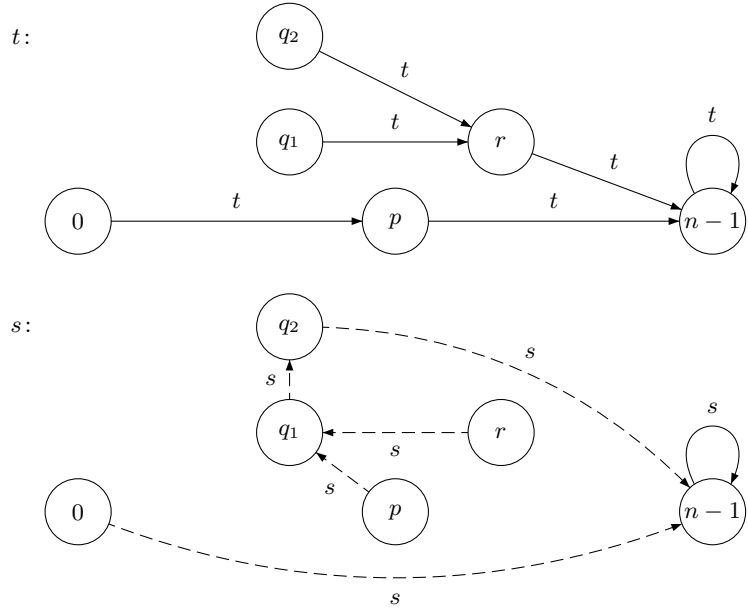
\begin{figure}[ht]
\unitlength 10pt\footnotesize
\begin{center}\begin{picture}(28,22)(0,-1)
\gasset{Nh=2.5,Nw=2.5,Nmr=1.25,ELdist=0.5,loopdiam=2}
\node[Nframe=n](name)(0,20){$t\colon$}
\node(0)(2,13){0}
\node(p)(14,13){$p$}
\node(n-1)(26,13){$n-1$}
\node(q1)(10,16){$q_1$}
\node(q2)(10,20){$q_2$}
\node(r)(18,16){$r$}
\drawedge(0,p){$t$}
\drawedge(p,n-1){$t$}
\drawloop[loopangle=90](n-1){$t$}
\drawedge(q1,r){$t$}
\drawedge(q2,r){$t$}
\drawedge(r,n-1){$t$}

\node[Nframe=n](name)(0,9){$s\colon$}
\node(0')(2,2){0}
\node(p')(14,2){$p$}
\node(n-1')(26,2){$n-1$}
\node(q1')(10,5){$q_1$}
\node(q2')(10,9){$q_2$}
\node(r')(18,5){$r$}
\drawedge[curvedepth=-2.5,dash={.5 .25}{.25},ELdist=-1](0',n-1'){$s$}
\drawloop[loopangle=90](n-1'){$s$}
\drawedge[dash={.5 .25}{.25}](p',q1'){$s$}
\drawedge[dash={.5 .25}{.25}](r',q1'){$s$}
\drawedge[dash={.5 .25}{.25}](q1',q2'){$s$}
\drawedge[dash={.5 .25}{.25},curvedepth=2](q2',n-1'){$s$}
\end{picture}\end{center}
\caption{Case~6 in the proof of Theorem~\ref{thm:suffix-free_upper-bound}.}
\label{fig:case6}
\end{figure}

Observe that $s$ has the following properties:
\bd
\item[(a)] There is only one focused colliding pair, namely $\{p,r\}$ mapped to $q_1$.

Clearly $p$ and $r$ collide. 
Note that no state can be mapped by $t$ to $q_1$ or $q_2$, since this would satisfy Case~5.
Because $q_1$ is the only state mapped by $s$ to $q_2$, it does not belong to a focused colliding pair. Also $0$ and $q_2$ are mapped to $n-1$. Since the other states are mapped exactly as in $t$, it follows that $s$ does not focus any other colliding pairs.

\item[(b)] The smallest $i$ with $ps^i = n-1$ is $3$.

\item[(c)] $s$ has neither a cycle nor a fixed point $\neq n-1$ with in-degree $\ge 2$.

This follows since $t$ does not have a cycle, and the states $0,p,r,q_1,q_2$ that are mapped differently by $s$ are in the orbit of $n-1$.
\ed

Since $s$ focuses the colliding pair $\{p,r\}$, $s$ is different from the transformations of Case~1. Also $s$ has neither a cycle nor a fixed point $\neq n-1$ and so is different from the transformations of Cases~2, 3 and~4. In Case~5, transformation $s^2$ maps a colliding pair to $n-1$, and here $s^2$ maps the unique colliding pair to $q_2 \neq n-1$. Thus, $s$ is different from the transformations of Case~5.

From~(a) we have the unique colliding pair $\{p,r\}$ focused to $q_1$. Then $q_1 < q_1 s = q_2$ means that $p < r$, and so $p$ is distinguished from $r$. Similarly, $q_1 > q_2$ means that $p > r$. Thus $0t=p$, $pt=n-1$, $q_1 t = r$, $q_2 t = r$, and $r t = n-1$, and $t$ is again uniquely defined from $s$.

\noindent{\hglue 15pt}
{\bf Case 7:} $t$ does not fit in any of the previous cases, but there are two states $q_1,q_2 \in Q \setminus \{0,n-1\}$ that are not fixed points and satisfy $q_1 t \neq n-1$ and $q_2 t \neq n-1$.\\
Since this is not Case~5 we may assume that $q_1t^2 = n-1$ and $q_2t^2 = n-1$.

Let $r_1 = q_1 t$ and $r_2 = q_2 t$; clearly $p \neq r_1$ and $p \neq r_2$. 
The in-degree of both $q_1$ and $q_2$ is $0$; otherwise $t$ would fit in Case~5.

We consider the following two sub-cases. If $p < r_1$ then {\bf(i)} 
let $s$ be the transformation illustrated in Fig.~\ref{fig:case7} and defined by
\begin{center}
$0 s = n-1, \quad p s = q_1, \quad r_1 s = q_1, \quad q_1 s = n-1,$

$q s = q t \text{ for the other states } q \in Q.$
\end{center}
If $p > r_1$ then {\bf(ii)} 
let $s$ be the transformation also illustrated in Fig.~\ref{fig:case7} and defined by
\begin{center}
$0 s = n-1, \quad p s = q_1, \quad r_1 s = q_1, \quad q_1 s = q_2,$

$q s = q t \text{ for the other states } q \in Q.$
\end{center}
\begin{figure}[ht]
\unitlength 10pt\footnotesize
\begin{center}\begin{picture}(28,22)(0,-1)
\gasset{Nh=2.5,Nw=2.5,Nmr=1.25,ELdist=0.5,loopdiam=2}
\node[Nframe=n](name)(0,20){$t\colon$}
\node(0)(2,13){0}
\node(p)(14,13){$p$}
\node(n-1)(26,13){$n-1$}
\node(q1)(10,16){$q_1$}
\node(q2)(10,20){$q_2$}
\node(r1)(18,16){$r_1$}
\node(r2)(18,20){$r_2$}
\drawedge(0,p){$t$}
\drawedge(p,n-1){$t$}
\drawloop[loopangle=90](n-1){$t$}
\drawedge(q1,r1){$t$}
\drawedge(q2,r2){$t$}
\drawedge(r1,n-1){$t$}
\drawedge(r2,n-1){$t$}

\node[Nframe=n](name)(0,9){$s\colon$}
\node(0')(2,2){0}
\node(p')(14,2){$p$}
\node(n-1')(26,2){$n-1$}
\node(q1')(10,5){$q_1$}
\node(q2')(10,9){$q_2$}
\node(r1')(18,5){$r_1$}
\node(r2')(18,9){$r_2$}
\drawedge[curvedepth=-2.5,dash={.5 .25}{.25},ELdist=-1](0',n-1'){$s$}
\drawloop[loopangle=90](n-1'){$s$}
\drawedge[dash={.5 .25}{.25}](p',q1'){$s$}
\drawedge[dash={.5 .25}{.25},ELdist=-1](r1',q1'){$s$}
\drawedge(q2',r2'){$s$}
\drawedge(r2',n-1'){$s$}
\drawedge[dash={.1 .25}{.25},ELdist=-1.5,curvedepth=-0.5](q1',n-1'){$s$(i)}
\drawedge[dash={.1 .25}{.25}](q1',q2'){$s$(ii)}
\end{picture}\end{center}
\caption{Case~7 in the proof of Theorem~\ref{thm:suffix-free_upper-bound}.}
\label{fig:case7}
\end{figure}
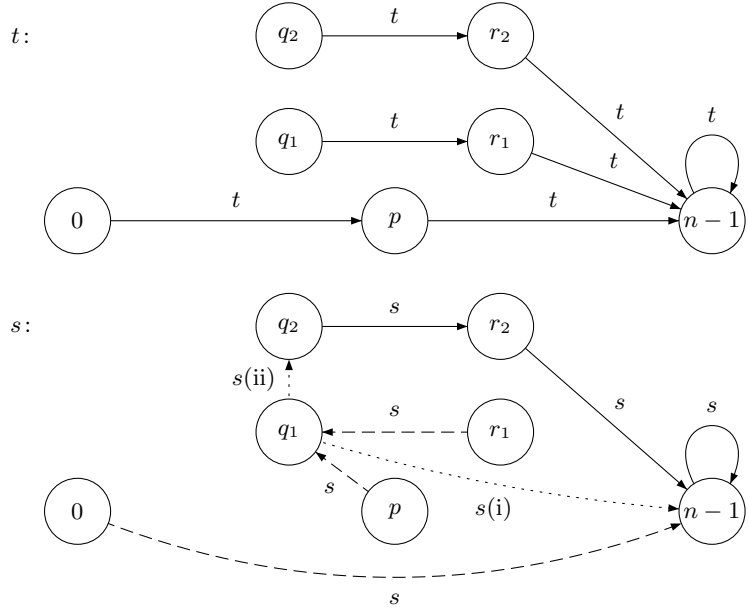

Observe that $s$ has the following properties:
\bd
\item[(a)] There is only one focused colliding pair, namely the pair $\{p,r_1\}$ mapped to $q_1$. Both states from the pair have in-degree $0$ in $s$.

Clearly $p$ and $r_1$ collide. 
No state is mapped by $s$ to $q_2$, and $0$ and $q_1$ are mapped to $n-1$ in (i). In (ii) $q_1$ is the only state mapped to $q_2$ and 0 is mapped to $n-1$. The other states are mapped exactly as in $t$. It follows that $s$ does not focus any other colliding pairs.

\item[(b)] The smallest $i$ with $ps^i = n-1$ is $2$ in (i), and is $4$ in (ii).

\item[(c)] $s$ has neither a cycle nor a fixed point with in-degree $\ge 2$ other than $n-1$.
\ed

Since $s$ focuses the colliding pair $\{p,r\}$, it is different from the transformations of Case~1. Also $s$ has neither a cycle nor a fixed point with in-degree $\ge 2$ other than $n-1$, and so is different from the transformations of Cases~2, 3 and~4. 
Here the states from the colliding pair have in-degree $0$, in contrast to the transformations of Case~5 (Property~(a) of Case~5). Now, observe that the smallest $i$ with $ps^i=n-1$ is $2$ or $4$, while in Case~6 it is $3$ (Property~(b) of Case~6).

From~(a) we have the unique colliding pair $\{p,r_1\}$ focused to $q_1$. If $ps^2 = n-1$, then $p < r_1$ (i), and so $p$ is distinguished from $r_1$. If $ps^2 \neq n-1$ then $ps^2 = q_2$, and $p > r_1$ (ii). Thus $0t=p$, $pt = n-1$, $r_1 t=n-1$, and $q_1 t = r_1$. Thus $t$ is uniquely defined from $s$.

\noindent{\hglue 15pt}
{\bf Case 8:} $t$ does not fit in any of the previous cases, but it has two fixed points $r_1$ and $r_2$ in $Q \setminus \{0,n-1\}$ with in-degree 1; assume that $r_1 < r_2$.\\

Let $s$ be the transformation illustrated in Fig.~\ref{fig:case8} and defined by
\begin{center}
$0 s = n-1, \quad p s = r_2, \quad r_1 s = r_2, \quad r_2 s = r_1,$

$q s = q t \text{ for the other states } q \in Q.$
\end{center}
\begin{figure}[ht]
\unitlength 10pt\footnotesize
\begin{center}\begin{picture}(28,19)(0,-1)
\gasset{Nh=2.5,Nw=2.5,Nmr=1.25,ELdist=0.5,loopdiam=2}
\node[Nframe=n](name)(0,16){$t\colon$}
\node(0)(2,11){0}
\node(p)(14,11){$p$}
\node(n-1)(26,11){$n-1$}
\node(r1)(17,14){$r_1$}
\node(r2)(22,14){$r_2$}
\drawedge(0,p){$t$}
\drawedge(p,n-1){$t$}
\drawloop[loopangle=90](r1){$t$}
\drawloop[loopangle=90](r2){$t$}

\node[Nframe=n](name)(0,7){$s\colon$}
\node(0')(2,2){0}
\node(p')(14,2){$p$}
\node(n-1')(26,2){$n-1$}
\node(r1')(17,5){$r_1$}
\node(r2')(22,5){$r_2$}
\drawedge[curvedepth=-2.5,dash={.5 .25}{.25},ELdist=-1](0',n-1'){$s$}
\drawedge[dash={.5 .25}{.25},ELdist=-1,curvedepth=-1.5](p',r2'){$s$}
\drawloop[loopangle=90](n-1'){$s$}
\drawedge[dash={.5 .25}{.25},curvedepth=1](r1',r2'){$s$}
\drawedge[dash={.5 .25}{.25},curvedepth=1](r2',r1'){$s$}
\end{picture}\end{center}
\caption{Case~8 in the proof of Theorem~\ref{thm:suffix-free_upper-bound}.}
\label{fig:case8}
\end{figure}
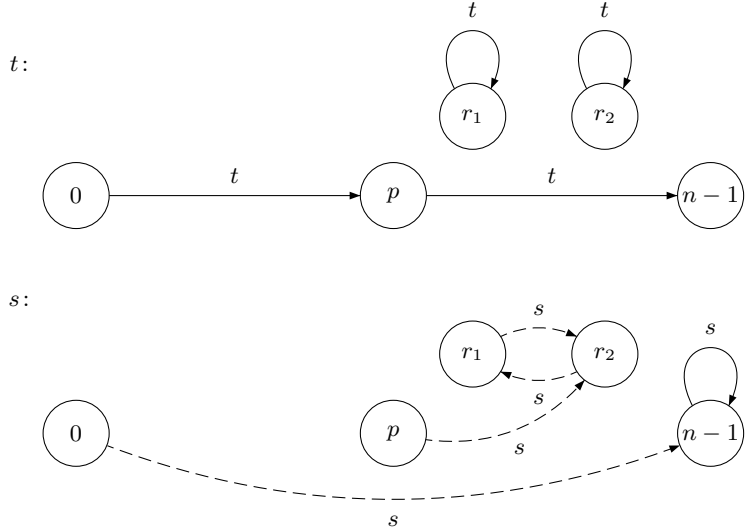

Observe that $s$ has the following properties:
\bd
\item[(a)] 
$\{p,r_1\}$ is the only focused colliding pair. The state to which the pair is focused lies on a cycle of length 2 and is not the minimal state in the cycle.

This follows from the fact that $r_1$, $r_2$, and $p$ are the only states in their orbit, only $p$ and $r_1$ are mapped to a single state, and $s$ differs from $t$ only by the mapping of 0 and of these three states.

\item[(b)] $s$ has a unique 2-cycle.
\ed

From~(a), $s \not\in T(n)$ and so $s$ is different from the transformations of Case~1. The transformations of Case~2 contain a cycle, but (Property~(a) of Case~2) the only colliding pair focused by them to a state lying on a cycle is mapped to the minimal state in cycles, in contrast to $s$ from this case. 
Also, $s$ has a cycle, and so differs from the transformations of Cases~3--7.

Again, $s$ uniquely defines $t$:
The orbit with the focused colliding pair contains precisely $p$, $r_1$, and $r_2$, and they are distinguished since $r_1<r_2$, and $r_1$ and $r_2$ lie on a cycle whereas $p$ does not.

\noindent{\hglue 15pt}
{\bf Case 9:} $t$ does not fit in any of the previous cases, but there is a state $q \in Q \setminus \{0,n-1\}$ that is not a fixed point and satisfies $qt \neq n-1$, $p < qt$, and there is a fixed point $f \neq n-1$.

Let $r = qt$; then $rt = n-1$ because otherwise this would fit in Case~5. Here $q$ is the only state from $Q \setminus \{0\}$ that is not a fixed point and  is not mapped to $n-1$, as otherwise $t$ would fit in Case~7. Similarly, $f$ is the only fixed point $\neq n-1$, as otherwise $t$ would fit in either Case~4 or Case~8.

Let $s$ be the transformation illustrated in Fig.~\ref{fig:case9} and defined by
\begin{center}
$0 s = n-1, \quad p s = r, \quad r s = q, \quad q s = p, \quad f s = r,$

$q s = q t \text{ for the other states } q \in Q.$
\end{center}
\begin{figure}[ht]
\unitlength 10pt\footnotesize
\begin{center}\begin{picture}(28,19)(0,-1)
\gasset{Nh=2.5,Nw=2.5,Nmr=1.25,ELdist=0.5,loopdiam=2}
\node[Nframe=n](name)(0,16){$t\colon$}
\node(0)(2,11){0}
\node(p)(14,11){$p$}
\node(n-1)(26,11){$n-1$}
\node(q)(11,14){$q$}
\node(r)(17,14){$r$}
\node(f)(22,14){$f$}
\drawedge(0,p){$t$}
\drawedge(p,n-1){$t$}
\drawedge(q,r){$t$}
\drawedge[ELdist=-1](r,n-1){$t$}
\drawloop[loopangle=90](f){$t$}

\node[Nframe=n](name)(0,7){$s\colon$}
\node(0')(2,2){0}
\node(p')(14,2){$p$}
\node(n-1')(26,2){$n-1$}
\node(q')(11,5){$q$}
\node(r')(17,5){$r$}
\node(f')(22,5){$f$}
\drawedge[curvedepth=-2.5,dash={.5 .25}{.25},ELdist=-1](0',n-1'){$s$}
\drawloop[loopangle=90](n-1'){$s$}
\drawedge[dash={.5 .25}{.25},curvedepth=-1,ELdist=-1](p',r'){$s$}
\drawedge[dash={.5 .25}{.25},curvedepth=-1,ELdist=-1](r',q'){$s$}
\drawedge[dash={.5 .25}{.25},curvedepth=-1,ELdist=-1](q',p'){$s$}
\drawedge[dash={.5 .25}{.25}](f',r'){$s$}
\end{picture}\end{center}
\caption{Case~9 in the proof of Theorem~\ref{thm:suffix-free_upper-bound}.}
\label{fig:case9}
\end{figure}
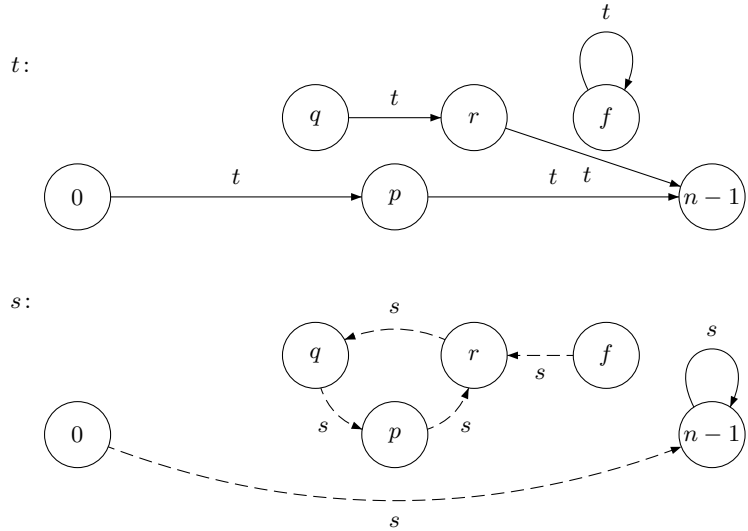

Observe that $s$ has the following properties:
\bd
\item[(a)] $\{p,f\}$ is the only focused colliding pair, and the state to which it is focused lies on a cycle of length 3 and is not the minimal state in the cycle.

\item[(b)] $s$ has a unique 3-cycle.
\ed

From~(a), $s \not\in T(n)$ and so $s$ is different from the transformations of Case~1. The transformations of Case~2 contain a cycle, but (Property~(a) of Case~2) the only colliding pair focused by them to a state lying on a cycle is focused to the minimal state appearing in a cycle, in contrast to $s$ from this case. 
Since $s$ has a cycle, it is different from the transformations of Cases~3--7. 
Also, $s$ has a unique 3-cycle in contrast with the transformations of Case~8, which have a unique 2-cycle (Property~(b) of Case~8).

Again, $s$ uniquely defines $t$: The orbit with the focused colliding pair contains precisely $p$, $q$, $r$ and $f$, and they are uniquely determined since $f$ is not in the 3-cycle and is mapped to $r$.

\noindent{\hglue 15pt}
{\bf Case 10:} $t$ does not fit in any of the previous cases, but there is a state $q \in Q \setminus \{0,n-1\}$ that is not a fixed point and satisfies $qt \neq n-1$, and a fixed point $f \in Q \setminus \{0,n-1\}$.\\
Let $r = qt$; then $rt = n-1$ since this is not Case~5. Now, in contrast to the previous case, we have $p > r$.

Let $s$ be the transformation illustrated in Fig.~\ref{fig:case10} and defined by
\begin{center}
$0 s = n-1, \quad p s = q, \quad r s = q, \quad q s = n-1,$

$q s = q t \text{ for the other states } q \in Q.$
\end{center}
\begin{figure}[ht]
\unitlength 10pt\footnotesize
\begin{center}\begin{picture}(28,19)(0,-1)
\gasset{Nh=2.5,Nw=2.5,Nmr=1.25,ELdist=0.5,loopdiam=2}
\node[Nframe=n](name)(0,16){$t\colon$}
\node(0)(2,11){0}
\node(p)(14,11){$p$}
\node(n-1)(26,11){$n-1$}
\node(q)(11,14){$q$}
\node(r)(17,14){$r$}
\node(f)(22,14){$f$}
\drawedge(0,p){$t$}
\drawedge(p,n-1){$t$}
\drawedge(q,r){$t$}
\drawedge[ELdist=-1](r,n-1){$t$}
\drawloop[loopangle=90](f){$t$}

\node[Nframe=n](name)(0,7){$s\colon$}
\node(0')(2,2){0}
\node(p')(14,2){$p$}
\node(n-1')(26,2){$n-1$}
\node(q')(11,5){$q$}
\node(r')(17,5){$r$}
\node(f')(22,5){$f$}
\drawedge[curvedepth=-2.5,dash={.5 .25}{.25},ELdist=-1](0',n-1'){$s$}
\drawloop[loopangle=90](n-1'){$s$}
\drawedge[dash={.5 .25}{.25},curvedepth=0](p',q'){$s$}
\drawedge[dash={.5 .25}{.25},curvedepth=0,ELdist=-1](r',q'){$s$}
\drawedge[dash={.5 .25}{.25},curvedepth=-0.5,ELdist=-1](q',n-1'){$s$}
\drawloop[loopangle=90](f'){$s$}
\end{picture}\end{center}
\caption{Case~10 in the proof of Theorem~\ref{thm:suffix-free_upper-bound}.}
\label{fig:case10}
\end{figure}
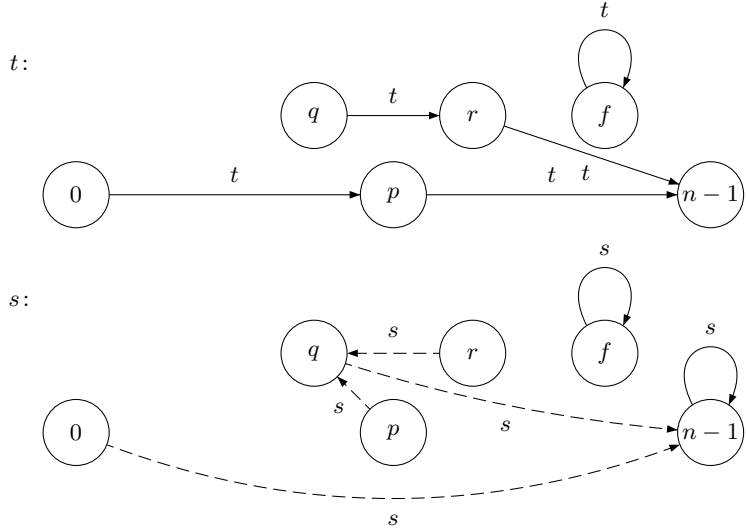

Observe that $s$ has the following properties:
\bd
\item[(a)] $\{p,r\}$ is the only focused colliding pair.

\item[(b)] $s$ does not have cycles, and each state $\in Q \setminus \{p,r,f\}$ is mapped to $n-1$.

\item[(c)] $s$ has the fixed point $f \neq n-1$.
\ed

From~(a), $s \not\in T(n)$ and so $s$ is different from the transformations of Case~1. Since the transformations of Cases~2, 8, and~9 contain cycles, they are different from $s$. Here the unique focused colliding pair is not mapped to a fixed point, in contrast with the transformations of Cases~3 and 4. Since both states from the pair have in-degree $0$, $s$ is different from the transformation of Case~5. For a distinction with Case~6, observe that the smallest $i$ such that $ps^i=n-1$ is $2$, in contrast with $3$ (Property~(b) of Case~6). For a distinction with Case~7, observe that besides the focused colliding states $p$ and $r$, there is no state $q'$ that is not a fixed point and satisfies $q's \neq n-1$, whereas in the transformation of Case~7 $q_2$ is such a state.

Again, $s$ uniquely defines $t$: Here $\{p,r\}$ is the only focused colliding pair, and $p$ is distinguished as the larger state.

\noindent{\hglue 15pt}
{\bf Case 11:} $t$ does not fit in any of the previous cases, but there is a state $q \in Q \setminus \{0,n-1\}$ that is not a fixed point and satisfies $qt \neq n-1$.\\
As shown in Case~9, $q$ is the only state from $Q \setminus \{0\}$ that is not mapped to $n-1$, and also $t$ has no fixed points other than $n-1$, as otherwise it would fit in one of the previous cases. Hence, all states from $Q \setminus \{0,q\}$ are mapped to $n-1$.
Let $r = qt$.

Here we use the assumption that $n \ge 7$. So in $Q \setminus \{0,p,q,r,n-1\}$ we have at least 2 states, say $r_1$ and $r_2$, that are mapped to $n-1$.

We consider the following two sub-cases:\\
{\bf Sub-case (i):} $p < r$.\\
Let $s$ be the transformation illustrated in Fig.~\ref{fig:case11} and defined by
\begin{center}
$0 s = n-1, \quad p s = q, \quad r s = q, \quad q s = n-1,$

$q s = q t \text{ for the other states } q \in Q.$
\end{center}
{\bf Sub-case (ii):} $p > r$.\\
Let $s$ be the transformation also illustrated in Fig.~\ref{fig:case11} and defined by
\begin{center}
$0 s = n-1, \quad p s = q, \quad r s = q, \quad q s = n-1, \quad r_1 s = r_2, \quad r_2 s = r_1,$

$q s = q t \text{ for the other states } q \in Q.$
\end{center}

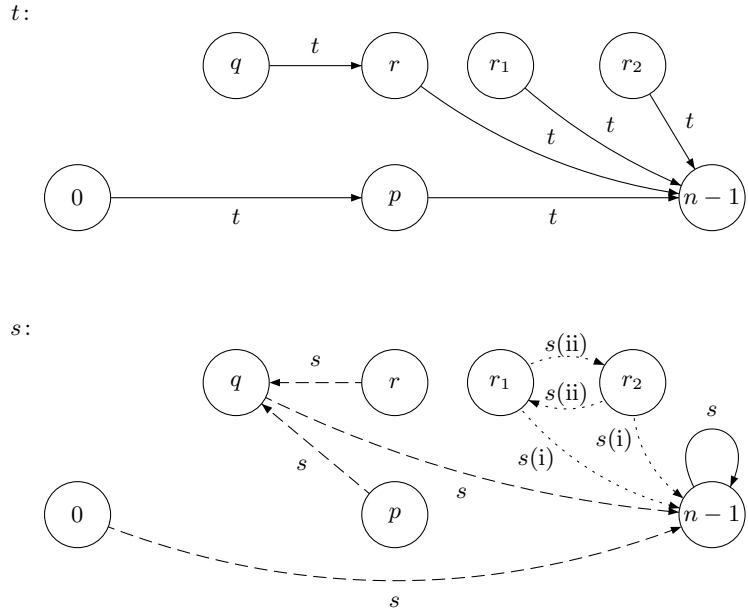
\begin{figure}[ht]
\unitlength 10pt\footnotesize
\begin{center}\begin{picture}(28,23)(0,-1)
\gasset{Nh=2.5,Nw=2.5,Nmr=1.25,ELdist=0.5,loopdiam=2}
\node[Nframe=n](name)(0,21){$t\colon$}
\node(0)(2,14){0}
\node(p)(14,14){$p$}
\node(n-1)(26,14){$n-1$}
\node(q)(8,19){$q$}
\node(r)(14,19){$r$}
\node(r1)(18,19){$r_1$}
\node(r2)(23,19){$r_2$}
\drawedge[ELdist=-1](0,p){$t$}
\drawedge[ELdist=-1](p,n-1){$t$}
\drawedge(q,r){$t$}
\drawedge[curvedepth=-1](r,n-1){$t$}
\drawedge[curvedepth=-.5](r1,n-1){$t$}
\drawedge[curvedepth=0](r2,n-1){$t$}

\node[Nframe=n](name)(0,9){$s\colon$}
\node(0')(2,2){0}
\node(p')(14,2){$p$}
\node(n-1')(26,2){$n-1$}
\node(q')(8,7){$q$}
\node(r')(14,7){$r$}
\node(r1')(18,7){$r_1$}
\node(r2')(23,7){$r_2$}
\drawedge[curvedepth=-2.5,dash={.5 .25}{.25},ELdist=-1](0',n-1'){$s$}
\drawloop[loopangle=90](n-1'){$s$}
\drawedge[dash={.5 .25}{.25},curvedepth=0](p',q'){$s$}
\drawedge[dash={.5 .25}{.25},curvedepth=0,ELdist=-1](r',q'){$s$}
\drawedge[dash={.5 .25}{.25},curvedepth=-1,ELdist=-1](q',n-1'){$s$}
\drawedge[dash={.1 .25}{.25},curvedepth=-1,ELdist=-1.7,ELpos=30](r1',n-1'){$s$(i)}
\drawedge[dash={.1 .25}{.25},curvedepth=-1,ELdist=-1.7,ELpos=30](r2',n-1'){$s$(i)}
\drawedge[dash={.1 .25}{.25},curvedepth=1,ELdist=0](r1',r2'){$s$(ii)}
\drawedge[dash={.1 .25}{.25},curvedepth=1,ELdist=-1](r2',r1'){$s$(ii)}
\end{picture}\end{center}
\caption{Case~11 in the proof of Theorem~\ref{thm:suffix-free_upper-bound}.}
\label{fig:case11}
\end{figure}
Observe that $s$ has the following properties:
\bd
\item[(a)] $\{p,r\}$ is the only  focused  colliding pair  and it is focused to state  $q$ with $qs = n-1$.

\item[(b)] $s$ does not have any fixed points other than $n-1$, and in (i) $s$ does not have any cycles, whereas in (ii) $s$ has a cycle but no colliding pair from the orbit of the cycle is  focused.
\ed

From~(a), $s \not\in T(n)$ and so $s$ is different from the transformations of Case~1. In (ii) $s$ has a cycle but the colliding pair is not from the orbit of the cycle, in contrast to Case~2. The transformations of Cases~3--10 do not have a cycle whose orbit has no  focused colliding pairs.

In (i) $s$ does not have a cycle, and so is different from the transformations of Cases~2, 8 and 9. Also, there is no fixed point other than $n-1$, so $s$ is different from the transformations of Cases~3 and 4. Since both $p$ and $r$ from the colliding pair have in-degree $0$, $s$ is different from the transformations of Case~5. Since the smallest $i$ such that $ps^i = n-1$ is $2$, $s$ is different from the transformations of Case~6. For a distinction with Case~7, observe that besides the colliding states $p$ and $r$, there is no state $q'$ that is not a fixed point and satisfies $q's \neq n-1$, whereas in the transformation of Case~7 $q_2$ is such a state.

The transformations of Cases~2, 8, and~9 contain cycles whose orbits contain a focused colliding pair. Here $s$ in (i) does not have a cycle, and in (ii) has a 2-cycle but the unique colliding pair is not in its orbit. Also, the unique  focused colliding pair  is not mapped to a fixed point, in contrast with the transformations of Cases~3, 4 and 10. Since both states from the pair have in-degree $0$, $s$ is different from the transformation of Case~5. For a distinction with Case~6, observe that the smallest $i$ such that $ps^i=n-1$ is $2$, in contrast with $3$ (Property~(b) of Case~6). For a distinction with Case~7, observe that besides the colliding states $p$ and $r$, there is no state $q'$ that is not a fixed point and satisfies $q's \neq n-1$, whereas in the transformation of Case~7 $q_2$ is such a state.

Again, $s$ uniquely defines $t$: Here $\{p,r\}$ is the only  focused colliding pair, and if we have a 2-cycle, then $p$ is distinguished as the smaller state; otherwise $p$ is the larger one from the pair.

\noindent{\hglue 15pt}
{\bf Case 12:} $t$ does not fit in any of the previous cases.\\
Here $t$ must contain exactly one fixed point $f \in Q \setminus \{n-1\}$, and every state from $Q \setminus \{0,f\}$ is mapped to $n-1$. 
If all states from $Q \setminus \{0\}$ would be mapped to $n-1$, then by Proposition~\ref{prop:sf_wit}, $t$ would be in $\Wsf(n)$ and so would fit in Case~1.

Because $n \ge 7$, in $Q \setminus \{0,p,f,n-1\}$ we have at least 2 states, say $r_1$ and $r_2$, that are mapped to $n-1$.

Let $s$ be the transformation illustrated in Fig.~\ref{fig:case12} and defined by
\begin{center}
$0 s = n-1, \quad p s = f, \quad r_1 s = r_2, \quad r_2 s = r_1,$

$q s = q t \text{ for the other states } q \in Q.$
\end{center}
\begin{figure}[ht]
\unitlength 10pt\footnotesize
\begin{center}\begin{picture}(28,22)(0,-1)
\gasset{Nh=2.5,Nw=2.5,Nmr=1.25,ELdist=0.5,loopdiam=2}
\node[Nframe=n](name)(0,19){$t\colon$}
\node(0)(2,13){0}
\node(p)(14,13){$p$}
\node(n-1)(26,13){$n-1$}
\node(f)(8,17){$f$}
\node(r1)(14,17){$r_1$}
\node(r2)(18,17){$r_2$}
\node[Nframe=n](rdots)(22,17){$\dots$}
\drawedge[ELdist=-1](0,p){$t$}
\drawedge[ELdist=-1](p,n-1){$t$}
\drawloop[loopangle=90](f){$t$}
\drawedge[curvedepth=-1](r1,n-1){$t$}
\drawedge[curvedepth=-0.5](r2,n-1){$t$}
\drawedge(rdots,n-1){$t$}

\node[Nframe=n](name)(0,8){$s\colon$}
\node(0')(2,2){0}
\node(p')(14,2){$p$}
\node(n-1')(26,2){$n-1$}
\node(f')(8,6){$f$}
\node(r1')(14,6){$r_1$}
\node(r2')(18,6){$r_2$}
\node[Nframe=n](rdots')(22,6){$\dots$}
\drawedge[curvedepth=-2.5,dash={.5 .25}{.25},ELdist=-1](0',n-1'){$s$}
\drawloop[loopangle=90](n-1'){$s$}
\drawloop[loopangle=90](f'){$s$}
\drawedge[dash={.5 .25}{.25},curvedepth=0](p',f'){$s$}
\drawedge[dash={.5 .25}{.25},curvedepth=1](r1',r2'){$s$}
\drawedge[dash={.5 .25}{.25},curvedepth=1](r2',r1'){$s$}
\drawedge(rdots',n-1'){$s$}
\end{picture}\end{center}
\caption{Case~12 in the proof of Theorem~\ref{thm:suffix-free_upper-bound}.}
\label{fig:case12}
\end{figure}
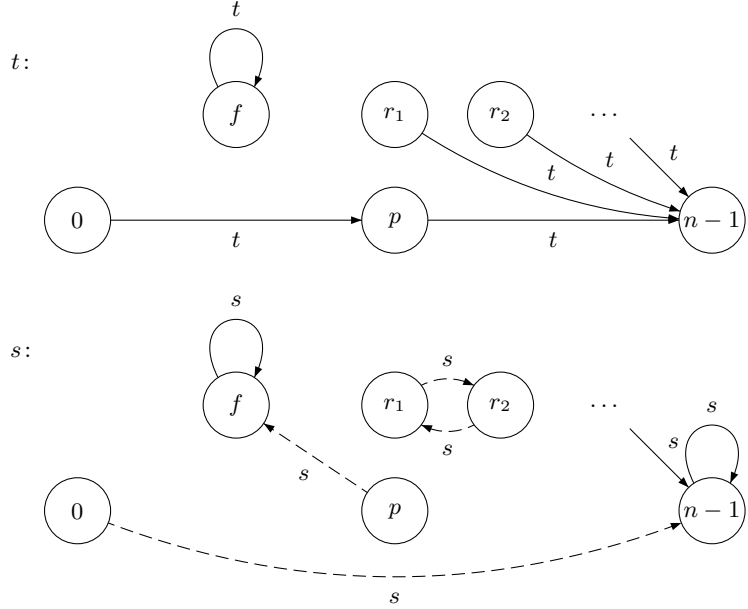

Observe that $s$ has the following properties:
\bd
\item[(a)] $\{p,f\}$ is the only  focused colliding pair, focused to  the fixed point $f$.

\item[(b)] $s$ has a 2-cycle, but no colliding pair from the orbit of the cycle is  focused.
\ed

From~(a), $s \not\in T(n)$ and so $s$ is different from the transformations of Case~1. Here $s$ has a cycle but the colliding pair is not from the orbit of the cycle, in contrast to Case~2. The transformations of Cases~3--10 do not have a cycle whose orbit has no focused colliding pairs. The transformations of Case~11 have such a cycle, but the orbit with the focused colliding pair is the orbit of fixed point $n-1$, and here it is the orbit of $f \neq n-1$.

Again, $s$ uniquely defines $t$: Here $\{p,f\}$ is the only  focused colliding pair, and $p$ is distinguished from $f$ as it is not a fixed point. Hence we can define $0t = p$, $pt = n-1$, $r_1 t = n-1$, and $r_2 t = n-1$.
\end{proof}

\section{Uniqueness of maximal witness}

Our third contribution is a proof that the transition semigroup of a DFA $\cD_n=(Q, \Sigma, \delta, 0,F)$ of a suffix-free language with syntactic complexity $(n-1)^{n-2}+n-2$ is unique.

\begin{lemma}
\label{lem:colliding0}
If $n\ge 4$ and $\cD_n$ has no colliding pairs, then $|T(n)|\le (n-1)^{n-2}+n-2$ and $T(n)$ is a subsemigroup of $\Wsf(n)$.
\end{lemma}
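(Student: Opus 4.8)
The plan is to show that when $\cD_n$ has no colliding pairs, its transition semigroup $T(n)$ is already contained in $\Wsf(n)$; the cardinality bound then follows immediately from Proposition~\ref{prop:sf_wit}, which tells us $|\Wsf(n)| = (n-1)^{n-2}+n-2$. So the entire task reduces to proving the containment $T(n) \subseteq \Wsf(n)$. Since every transition semigroup of a suffix-free language is contained in $\Bsf(n)$ (as recalled in the excerpt from~\cite{BLY12}), it suffices to take an arbitrary $t \in T(n) \subseteq \Bsf(n)$ and show that the absence of colliding pairs forces $t$ to have the special form characterizing $\Wsf(n)$, namely that either $0t = n-1$, or $qt = n-1$ for every $q$ with $1 \le q \le n-2$.

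First I would set $p = 0t$ and split on whether $p = n-1$. If $p = n-1$ then $t$ satisfies the first disjunct defining $\Wsf(n)$ and we are done, so assume $p \in Q \setminus \{0,n-1\}$ (recall $0 \notin Qt$ by the definition of $\Bsf(n)$, so $p \neq 0$). The key step is then to argue that no middle state can avoid being sent to $n-1$. Suppose for contradiction there is some $q$ with $1 \le q \le n-2$ and $qt \neq n-1$. The definition of colliding in the excerpt says $\{p', q'\}$ is colliding if some transformation in $T(n)$ maps $0$ to $p'$ and some middle state $r$ to $q'$. Here the single transformation $t$ itself maps $0$ to $p$ and maps $q$ to $qt$. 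If $qt \ne p$, this exhibits $\{p, qt\}$ as a colliding pair (note $qt \ne n-1$ by assumption and $p \ne n-1$), contradicting our hypothesis; and the case $qt = p$ must be handled separately, since then $t$ does not on its own witness a collision of two distinct states.

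The main obstacle I anticipate is precisely this degenerate case $qt = p$, where the naive choice of witnessing transformation collapses. To handle it I would compose: consider $t^2$, which maps $0$ to $pt$ and maps $q$ to $qt \cdot t = pt$, so $t^2$ focuses $\{0\text{-image}, q\}$ — but this focuses rather than collides, so instead I would use the chain structure. Since $q \ne 0$ and $0 \notin Qt$, the state $q$ is reached from $0$ along the $0$-path only through $t$-images, and by Lemma~\ref{lem:sf}~(4) the $0$-path is aperiodic and ends at $n-1$; I would track where $0t^j$ and $qt^j$ go and find an index at which the collision condition $0t^j \ne qt^j$ (guaranteed by membership in $\Bsf(n)$ whenever $0t^j \ne n-1$) produces two distinct non-empty states both hit, one from $0$ and one from a middle state, contradicting the no-colliding-pairs hypothesis. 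Once every middle state is forced to $n-1$, together with $(n-1)t = n-1$ from $\Bsf(n)$, the transformation $t$ lies in $\Wsf(n)$ by its defining condition.

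Finally, having established $t \in \Wsf(n)$ for every generator and hence (since $\Wsf(n)$ is a semigroup) for every element of $T(n)$, I conclude $T(n) \subseteq \Wsf(n)$, so $T(n)$ is a subsemigroup of $\Wsf(n)$ and $|T(n)| \le |\Wsf(n)| = (n-1)^{n-2}+n-2$, completing the proof.
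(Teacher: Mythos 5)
Your proposal follows essentially the same route as the paper's proof: fix $t \in T(n)$, set $p = 0t$, split on whether $p = n-1$, and in the second case use the absence of colliding pairs to force every middle state to $n-1$, so that $t$ has the form defining $\Wsf(n)$; the count then comes from Proposition~\ref{prop:sf_wit}. The one place where you diverge is the ``degenerate case'' $qt = p$, which you single out as the main obstacle and propose to handle by composing $t$ with itself and tracking the $0$-path. That detour is both unnecessary and, as sketched, would not succeed: if $qt = 0t = p$ for some $q$ with $0 < q < n-1$, then $qt^j = 0t^j$ for \emph{every} $j \ge 1$, so the two trajectories coincide forever and no power of $t$ can ever exhibit two distinct states, one hit from $0$ and one from a middle state --- your chain-tracking could never manufacture the colliding pair you are looking for. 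The case is instead vacuous for exactly the reason you quote parenthetically: the defining condition of $\Bsf(n)$ at $j = 1$ (equivalently Lemma~\ref{lem:sf}(2)) says that $0t \neq n-1$ forces $0t \neq qt$ for all $0 < q < n-1$, so $qt = p$ simply cannot occur for $t \in T(n) \subseteq \Bsf(n)$. With that one-line observation substituted for your sketch, your argument is complete and coincides with the paper's, whose proof compresses the same step into ``$qt = n-1$ for any $q \neq 0$, because there are no colliding pairs,'' silently relying on the same clause of $\Bsf(n)$ to exclude $qt = 0t$.
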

\begin{proof}
Consider an arbitrary transformation $t\in T(n)$ and let $p=0t$. 
If $p=n-1$, then any state other than $0$ and $n-1$ can possibly be mapped by $t$ to any one of the $n-1$ states other than 0 (0 is not possible in view of Lemma~\ref{lem:sf}); hence there are at most $(n-1)^{n-2}$ such transformations. All of these transformations are in $\Wsf(n)$ by 
Proposition~\ref{prop:sf_wit}.

If $p\neq n-1$, then $qt = n-1$ for any $q \neq 0$, because there are no colliding pairs.
Thus $t = (Q\setminus \{0\} \to n-1)(0\to p)$,
and all of $n-2$ such transformations are in $\Wsf(n)$ by Proposition~\ref{prop:sf_wit}.
It follows that $T(n)$ is a subsemigroup of $\Wsf(n)$ and has size at most $(n-1)^{n-2}+n-2$.
\end{proof}

\begin{lemma}
If $n \le 7$ and $\cD_n$ has at least one colliding pair, then $|T(n)| < (n-1)^{n-2} + n - 2$.
\end{lemma}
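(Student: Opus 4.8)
The plan is to reuse the injection $\varphi\colon T(n)\to\Wsf(n)$ constructed in the proof of Theorem~\ref{thm:suffix-free_upper-bound} (whose construction applies verbatim in the present range), for which $|T(n)|=|\varphi(T(n))|$, and to show that $\varphi$ cannot be surjective once $\cD_n$ has a colliding pair. Since $|\Wsf(n)|=(n-1)^{n-2}+n-2$, producing even one transformation of $\Wsf(n)$ outside the image of $\varphi$ forces $|T(n)|<(n-1)^{n-2}+n-2$. So I would fix a colliding pair $\{p_0,q_0\}\subseteq\{1,\dots,n-2\}$ and aim to exhibit such a missing transformation.

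First I would write down an explicit candidate $s_0$. Choose two states $c_1<c_2$ in $\{1,\dots,n-2\}\setminus\{p_0,q_0\}$ — possible because the range under consideration leaves at least four states in $\{1,\dots,n-2\}$ — and define $s_0$ by $0s_0=n-1$, $(n-1)s_0=n-1$, the transposition $c_1s_0=c_2$ and $c_2s_0=c_1$, the focus $p_0s_0=q_0s_0=c_2$, and $qs_0=n-1$ for every remaining middle state $q$. Then $s_0$ sends $0$ and $n-1$ to $n-1$, fixes $n-1$, and maps $Q\setminus\{0,n-1\}$ into $Q\setminus\{0\}$, so by Proposition~\ref{prop:sf_wit} it is a transformation of type~(a) and lies in $\Wsf(n)$. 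Because $s_0$ focuses the colliding pair $\{p_0,q_0\}$ to the middle state $c_2$, suffix-freeness gives $s_0\notin T(n)$; in particular $s_0$ is not an image of Case~1.

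The core step is to prove $s_0\notin\varphi\bigl(T(n)\setminus\Wsf(n)\bigr)$, i.e. that $s_0$ is produced by none of Cases~2--12. For this I would single out the invariant that $s_0$ satisfies but no case-image does: \emph{$s_0$ focuses a colliding pair onto a state lying on a cycle, while both states of that pair lie off the cycle.} Here the focus target $c_2$ lies on the $2$-cycle $(c_1,c_2)$, and neither $p_0$ nor $q_0$ equals $c_1$ or $c_2$, so both lie off it. By contrast, Cases~3--7, 10 and~11(i) yield transformations with no cycle at all, so no pair is focused onto a cycle; Cases~2, 8 and~9 do focus a colliding pair onto a cycle, but their property~(a) records that exactly one state of that pair already lies on the cycle (namely $r'$, $r_1$, and $p$, respectively); and Cases~11(ii) and~12, though they contain a $2$-cycle, focus their colliding pair either into the orbit of $n-1$ or onto a fixed point, never onto a state of the cycle. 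Hence $s_0$ coincides with no case-image, giving $s_0\notin\varphi(T(n))$ and the desired strict inequality.

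I expect the main obstacle to be the exhaustive verification of this last step against the cyclic cases~2, 8, 9, 11(ii) and~12: one must read off from each definition and its property~(a) that any colliding pair it focuses onto a cycle keeps a state on that cycle, and confirm that the only cycle of $s_0$ is the transposition $(c_1,c_2)$ with $p_0,q_0$ outside it. A minor but necessary point is the state budget — placing $p_0$ and $q_0$ off a $2$-cycle needs four states in $\{1,\dots,n-2\}$ — which is exactly why the conclusion is asserted in this range; for $n\in\{4,5\}$ the quantity $(n-1)^{n-2}+n-2$ is not even the maximal syntactic complexity (it is realized by $\Vsf(n)$), so no conflict arises. Combined with Lemma~\ref{lem:colliding0}, this shows that attaining the maximum $(n-1)^{n-2}+n-2$ forces the absence of colliding pairs and hence $T(n)=\Wsf(n)$.
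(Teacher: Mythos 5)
Your proposal is correct and uses the same overall strategy as the paper: keep the injection $\varphi$ from the proof of Theorem~\ref{thm:suffix-free_upper-bound}, exhibit a single transformation of $\Wsf(n)$ that focuses the given colliding pair (hence lies outside $T(n)$) and matches none of the twelve case profiles, and conclude $\varphi(T(n))\subsetneq \Wsf(n)$. The difference is in the witness and the separating invariant. The paper's witness sends $p\mapsto r$, $r\mapsto r$ (focusing the pair to a fixed point) and plants a $3$-cycle on three spare states, so it looks exactly like a Case~12 image except that Case~12 always produces a $2$-cycle; the exclusion of Cases~2--11 is then inherited wholesale from Case~12's own injectivity discussion, and the three spare states are what force $n\ge 7$. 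Your $s_0$ instead focuses the pair onto the larger state $c_2$ of a $2$-cycle $(c_1,c_2)$, with both members of the pair off the cycle, and you separate it from the cyclic cases by observing that in Cases~2, 8 and~9 any colliding pair focused onto a cycle retains one member on that cycle, while in Cases~11(ii) and~12 the focus target is off-cycle. This is sound at the same level of rigor as the paper: your Case~2 exclusion leans on the uniqueness half of that case's property~(a), which is precisely the reading the paper's own inverse construction uses. If you want an exclusion of Case~2 that does not depend on that reading, note that a Case~2 image focuses onto the \emph{minimal} cyclic state $r$, and the only $s_0$-preimage of $c_1=\min\{c_1,c_2\}$ is the cyclic state $c_2$, which can never play the role of the off-cycle state $p$; so $s_0$ cannot arise from Case~2 in any event. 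A minor bonus of your witness is economy: it needs only four middle states, so the construction itself makes sense already for $n\ge 6$, although the appeal to $\varphi$ still confines the argument to $n\ge 7$ --- the range both proofs actually require (the ``$n\le 7$'' in the statement is evidently a typo for $n\ge 7$, as the subsequent corollary confirms), and your closing remark correctly combines the lemma with Lemma~\ref{lem:colliding0} to get uniqueness of the maximal semigroup, just as the paper does.
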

\begin{proof}
Let $\varphi$ be the injective function from the proof of Theorem~\ref{thm:suffix-free_upper-bound} and 
assume that there is a colliding pair $\{p,r\}$.
Let $r_1$, $r_2$ and $r_3$ be three distinct states from $Q \setminus \{0,p,r,n-1\}$; there are at least 3 such states since $n \ge 7$.
Let $s$ be the following transformation:
\begin{center}
$0 s = n-1, \quad p s = r, \quad  rs=r,   \quad r_1 s = r_2, \quad r_2 s = r_3, \quad r_3 s = r_1,$

$q s = q t \text{ for the other states } q \in Q.$
\end{center}

We can show that $s$ is not defined in any case in the proof of Theorem~\ref{thm:suffix-free_upper-bound}.
Note that $s$ focuses the colliding pair $\{p,r\}$, and so it cannot be present in $T(n)$; hence it is not defined in Case~1. We can follow the proof of injectivity of the transformations in Case~12 of Theorem~\ref{thm:suffix-free_upper-bound}, and show that $s$ is different from all the transformations of Cases~2--11. For a distinction from the transformations of Case~12, observe that they each  have a  2-cycle, and here $s$ has a  3-cycle.

Thus $s \not\in \varphi(T(n))$, but $s \in \Wsf(n)$, and so $\varphi(T(n)) \subsetneq \Wsf(n)$. Since $\varphi$ is injective, it follows that $|T(n)| < |\Wsf(n)| = (n-1)^{n-2}+n-2$.
\end{proof}

\begin{corollary}
\label{cor:unique}
For $n\ge 7$, the maximal transition semigroups of DFAs of suffix-free languages are unique.
\end{corollary}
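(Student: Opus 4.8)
The final statement to prove is Corollary~\ref{cor:unique}: for $n\ge 7$, the maximal transition semigroups of DFAs of suffix-free languages are unique.

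The plan is to combine the three preceding results into a dichotomy argument. Theorem~\ref{thm:suffix-free_upper-bound} establishes that $(n-1)^{n-2}+n-2$ is the exact syntactic complexity for $n\ge 6$, so a \emph{maximal} transition semigroup $T(n)$ is precisely one attaining this value. The two lemmas immediately preceding the corollary split according to whether $\cD_n$ has a colliding pair. First I would invoke Lemma~\ref{lem:colliding0}: if $\cD_n$ has no colliding pairs, then $T(n)$ is a subsemigroup of $\Wsf(n)$ with $|T(n)|\le (n-1)^{n-2}+n-2=|\Wsf(n)|$; equality of the cardinalities of a semigroup and a semigroup containing it forces $T(n)=\Wsf(n)$. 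Second, I would invoke the other lemma: if $\cD_n$ \emph{has} at least one colliding pair, then $|T(n)|<(n-1)^{n-2}+n-2$, so such a $T(n)$ is strictly smaller than the maximum and cannot be a maximal witness.

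Putting these together, any $T(n)$ achieving the maximal size $(n-1)^{n-2}+n-2$ must fall into the no-colliding-pair case, and hence must equal $\Wsf(n)$. Thus every maximal transition semigroup coincides with $\Wsf(n)$, which is exactly the asserted uniqueness. I would phrase the argument cleanly as: let $T(n)$ be maximal, so $|T(n)|=(n-1)^{n-2}+n-2$; by the second lemma $\cD_n$ has no colliding pair (else the size would be strictly smaller); then by Lemma~\ref{lem:colliding0} we get $T(n)\subseteq\Wsf(n)$ with equal cardinalities, whence $T(n)=\Wsf(n)$.

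I do not anticipate a genuine obstacle, since the corollary is a formal consequence of results already proved; the only point requiring a word of care is the hypothesis ranges. Lemma~\ref{lem:colliding0} is stated for $n\ge 4$ and the colliding-pair lemma for $n\ge 7$ (its proof uses three spare states from $Q\setminus\{0,p,r,n-1\}$, available only when $n\ge 7$), so restricting the corollary to $n\ge 7$ is exactly what makes both inputs simultaneously applicable. The mild subtlety is the step ``subsemigroup of equal finite cardinality implies equality,'' which is immediate for finite sets but worth stating explicitly to make the uniqueness conclusion airtight.
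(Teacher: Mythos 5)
Your proof is correct and takes essentially the same approach as the paper: the corollary is given there without its own proof, as an immediate consequence of the two preceding lemmas via exactly the dichotomy you describe (a colliding pair forces $|T(n)|<(n-1)^{n-2}+n-2$, while no colliding pair forces $T(n)\subseteq \Wsf(n)$ and hence, at maximal cardinality, $T(n)=\Wsf(n)$). You also handled the one subtlety correctly: the second lemma is misprinted in the paper as ``$n\le 7$,'' but as you observed its proof needs three spare states in $Q\setminus\{0,p,r,n-1\}$ and thus really requires $n\ge 7$, which is precisely the range of the corollary.
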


Finally, we show that $\Sig$ cannot have fewer than five letters.

\begin{theorem}
If $n\ge 7$,  $\cD_n=(Q, \Sigma, \delta, 0,F)$ is a minimal DFA of a suffix-free language, and $|\Sig|<5$, then $|T(n)|<(n-1)^{n-2}+n-2$.
\end{theorem}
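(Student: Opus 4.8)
The plan is to deduce the theorem from the uniqueness result (Corollary~\ref{cor:unique}) together with a lower bound on the number of generators of $\Wsf(n)$. Specifically, I would prove that $\Wsf(n)$ cannot be generated by four or fewer transformations. Granting this, suppose for contradiction that $|T(n)|=(n-1)^{n-2}+n-2$ while $|\Sig|<5$. By Corollary~\ref{cor:unique} the maximal transition semigroup is unique, so $T(n)=\Wsf(n)$. But $T(n)$ is generated by the transformations induced by the letters of $\Sig$, of which there are at most $|\Sig|\le 4$ distinct ones; hence $\Wsf(n)$ would have a generating set of size at most $4$, contradicting the lower bound. Thus $|T(n)|<(n-1)^{n-2}+n-2$.

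It remains to bound the number of generators of $\Wsf(n)$ from below. By Proposition~\ref{prop:sf_wit} every element of $\Wsf(n)$ is of type (a) (it sends $0$ and $n-1$ to $n-1$ and maps the middle states $\{1,\dots,n-2\}$ into $Q\setminus\{0\}$) or of type (b) (it sends $0$ to a middle state and all other states to $n-1$). I would first record that the type-(a) transformations form a subsemigroup isomorphic to the partial transformation monoid $\mathcal{PT}_{n-2}$ on the middle states: restrict a type-(a) transformation to $\{1,\dots,n-2\}$ and read the value $n-1$ as ``undefined''; composition is respected because $n-1$ is a fixed sink. Under this identification the ``units'' of $\mathcal{PT}_{n-2}$ are the type-(a) maps that permute the middle states, and the ``zero'' is the constant map $(Q\to n-1)$. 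A short computation of the composition table shows that $(\mathrm a)(\mathrm a)$ is of type (a), while $(\mathrm a)(\mathrm b)$ and $(\mathrm b)(\mathrm b)$ give the zero and $(\mathrm b)(\mathrm a)$ is of type (b) or the zero. Consequently every product of generators that equals a non-zero type-(a) transformation must use type-(a) generators only, so the type-(a) generators alone must generate all of $\mathcal{PT}_{n-2}$; and since type (a) is closed under composition whereas $\Wsf(n)$ contains type-(b) maps (here $n-2\ge1$), at least one type-(b) generator is also needed.

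The core of the argument, and the step I expect to be the main obstacle, is to show that $\mathcal{PT}_{n-2}$ needs at least four generators. The group of units is $S_{n-2}$, and since a product of transformations has image size no larger than each factor, a product is a unit only if all its factors are units; hence the unit generators must generate $S_{n-2}$, which for $n-2\ge 3$ has minimal generating set of size $2$. For the remaining two, I would use that the total maps form a subsemigroup closed under composition and that the injective partial maps form a subsemigroup closed under composition; moreover an element of $\mathcal{PT}_{n-2}$ of image size $n-3$ is either total and non-injective, or proper-partial and injective, but never both. Because image size cannot increase under composition, the image-size-$(n-3)$ elements are produced using only units and image-size-$(n-3)$ generators; by the two closure properties one needs at least one non-injective generator to obtain the total non-injective ones and at least one proper-partial generator to obtain the injective proper-partial ones, and by the dichotomy these are two distinct generators. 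This yields at least $2+2=4$ type-(a) generators. (Alternatively one may simply cite the classical fact that the partial transformation monoid has minimal generating set of size $4$.)

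Combining the two bounds gives at least $4$ type-(a) generators and at least $1$ type-(b) generator, hence at least $5$ distinct generators of $\Wsf(n)$, completing the proof. All the side conditions hold because $n\ge 7$ forces $n-2\ge 5\ge 3$, so $S_{n-2}$ indeed has minimal generating set of size $2$ and the image-size-$(n-3)$ class splits into the two stated kinds.
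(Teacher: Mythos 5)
Your proposal is correct, and its top-level strategy coincides with the paper's: both reduce, via Proposition~\ref{prop:sf_wit} and Corollary~\ref{cor:unique}, to showing that $\Wsf(n)$ cannot be generated by four transformations, and both end up forcing the same five kinds of generators. Where you differ is in how the count is organized, and your packaging is arguably cleaner. The paper argues partly on the automaton side: it cites the well-known rank $3$ of the full transformation monoid to get three letters $a,b,c$ inducing all total transformations of the middle set $M$, uses reachability of the middle states in a minimal DFA to force a fourth letter $e$ of type~(b), and then obtains the fifth letter $d$ by the ad hoc observation that neither $e$ (which maps \emph{all} of $M$ to $n-1$) nor $a,b,c$ (which map none of it there) can send a proper non-empty subset of $M$ to $n-1$. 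You instead work purely inside the semigroup: your composition table for types (a) and (b) is right (since $n-1$ is a sink), so every non-zero type-(a) element is a product of type-(a) generators, the type-(a) part is exactly $\mathcal{PT}_{n-2}$, and the question becomes its rank. Your rank-$4$ argument is sound: full rank forces all factors to be units, so two unit generators are needed since $S_{n-2}$ is not cyclic for $n-2\ge 3$; at rank $n-3$ the total/non-injective versus proper-partial/injective dichotomy holds, and since total maps and injective partial maps are each closed under composition, each of the two kinds requires its own generator; one further type-(b) generator is needed because type (a) is closed under products (this replaces the paper's reachability argument with a purely algebraic one). The correspondence with the paper is exact: $a,b,c$ match your two units plus the total non-injective generator, $d$ matches your injective proper-partial generator, and $e$ matches your type-(b) generator. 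What your route buys is a citable classical fact (the rank of $\mathcal{PT}_k$ is $4$; it appears in Ganyushkin--Mazorchuk, which the paper already references) in place of two separate ad hoc steps; the cost is the small amount of bookkeeping in the type-(a)/(b) composition table, which you carried out correctly.
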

\begin{proof}
DFA $\cD_n$ has the initial state 0, and an empty state, say $n-1$.
Let $M$ be the set of the remaining $n-2$ ``middle'' states.
From Lemma~\ref{lem:sf} no transformation can map any state in $Q$ to 0, and every transformation fixes $n-1$. 

Suppose the upper bound $(n-1)^{n-2}+n-2$ is reached by $T(n)$. 
From Proposition~\ref{prop:sf_wit} and Corollary~\ref{cor:unique} all transformations of $M$ must be possible, and it is well known that three generators are necessary to achieve this. Let the letters $a$, $b$, and $c$ correspond to these three generators, $t_a$, $t_b$ and $t_c$.
If $0 t_a \neq n-1$, then $t_a$ must be a transformation of type (b) from Proposition~\ref{prop:sf_wit}, and so $q t_a = n-1$ for any $q \in M$. So $t_a$ cannot be a generator of a transformation of $M$.
Hence we must have $0t_a=n-1$, and also $0t_b=0t_c=n-1$.

So far, the states in $M$ are not reachable from 0; hence there must be a letter, say $e$, such that $0t_e=p$ is in $M$.
This must be a transformation of type (b) from Proposition~\ref{prop:sf_wit}, and all the states of $M$ must be mapped to $n-1$ by $t_e$.

Finally, to reach the upper bound we must be able to map any proper subset of $M$ to $n-1$.
The letter $e$ will not do, since it maps \emph{all} states of $M$ to $n-1$. Hence we require a fifth letter, say~$d$.
\end{proof}

\subsection{Uniqueness of small semigroups}

Here we consider maximal transition semigroups of DFAs  having six or fewer states and accepting suffix-free languages.
Recall that every transformation in a transition semigroup of a DFA with $n$ states accepting a suffix-free language must be in $\mathrm{\mathbf{B}_{sf}}(n)$.

There is only one transformation in $\mathrm{\mathbf{B}_{sf}}(2)$, and  three in $\mathrm{\mathbf{B}_{sf}}(3)$.
Since $\mathrm{\mathbf{B}_{sf}}(2)$ and $\mathrm{\mathbf{B}_{sf}}(3)$ are semigroups, the maximal semigroups for $n=2$ and $n=3$ have 1 and 3 elements, respectively, and are unique.
From~\cite{BLY12} it is known that $\Vsf(n) = \Wsf(n)= \mathrm{\mathbf{B}_{sf}}(n)$ if $n\in \{2,3\}$.

From~\cite{BLY12} it is also known that 
$\Vsf(n)$ is largest  for $n \in \{4, 5\}$ and that $\Wsf(n)$ is largest for $n = 6$.
We now prove that  $\Vsf(4)$, $\Vsf(5)$, and $\Wsf(6)$ are the unique largest semigroups for those values of $n$.

We say that  transformations $t$ and $t'$ in $\mathrm{\mathbf{B}_{sf}}(n)$ \emph{conflict} if,
whenever $t$ and $t'$ belong to the transition semigroup of a DFA $\cD$, then one of the following conditions holds:
\begin{enumerate}
\item The language accepted by $\cD$ is not suffix-free.
\item Every two states from $\{1,\ldots,n-2\}$ are colliding.
\item Every two states from $\{1,\ldots,n-2\}$ are focused.
\end{enumerate}

\begin{lemma}\label{lem:largest_other}
In a largest transition semigroup $X_n$ of a DFA $\cD_n=(Q, \Sigma, \delta, 0,F)$ of a suffix-free language there are no conflicting pairs of transformations, unless $X_n = \Vsf$ or $X_n = \Wsf$.
\end{lemma}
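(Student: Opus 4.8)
The plan is to prove the statement in the following direct, disjunctive form: assuming $X_n$ is a largest transition semigroup and that it contains a conflicting pair $\{t,t'\}$, I will show that $X_n$ must equal $\Vsf(n)$ or $\Wsf(n)$. The entry point is the definition of ``conflict'' itself. Since $t$ and $t'$ both lie in $X_n$, which is the transition semigroup of the DFA $\cD_n$, one of the three conditions (1)--(3) in the definition must hold for $\cD_n$. But $\cD_n$ accepts a suffix-free language by hypothesis, so condition (1) is ruled out at once. Hence either condition (2) or condition (3) holds, and I would treat these as two separate cases.

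First I would handle condition (2): every two states of $\{1,\dots,n-2\}$ are colliding in $X_n$. Then $X_n$ is a transition semigroup of a suffix-free language in which all pairs are colliding, so by Proposition~\ref{prop:unique_all_colliding} it is contained in $\Vsf(n)$, the unique maximal such semigroup. The leverage is then maximality by cardinality: since $\Vsf(n)$ is itself realized by a suffix-free DFA (via the generators $a$, $b$, $c_p$), and $X_n$ is a largest transition semigroup, we have $|\Vsf(n)| \le |X_n| \le |\Vsf(n)|$, forcing $X_n = \Vsf(n)$.

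Next I would treat condition (3): every two states of $\{1,\dots,n-2\}$ are focused in $X_n$. The crucial observation is that by suffix-freeness a colliding pair can never be focused to a state other than $n-1$ (this is exactly the consequence of Lemma~\ref{lem:sf}(2) recorded just after the definition of colliding and focused pairs). Consequently, if every pair is focused to a state in $\{1,\dots,n-2\}$, then no pair can be colliding. Proposition~\ref{prop:unique_no_colliding} then applies: $\Wsf(n)$ is the unique maximal semigroup of a suffix-free language with no colliding pairs, so $X_n \subseteq \Wsf(n)$; and the same cardinality argument, now using that $\Wsf(n)$ is realized by the witness $\cW_n$, yields $X_n = \Wsf(n)$. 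Together with the previous case this exhausts the possibilities and proves the lemma.

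The main obstacle I anticipate is not a hard computation but rather the need to bridge two distinct notions of maximality. Propositions~\ref{prop:unique_all_colliding} and~\ref{prop:unique_no_colliding} are phrased in terms of containment (maximal with respect to inclusion), whereas the lemma speaks of a \emph{largest} semigroup (maximal cardinality). The argument hinges on both $\Vsf(n)$ and $\Wsf(n)$ being realizable transition semigroups of suffix-free DFAs, so that ``largest by cardinality'' upgrades the containments $X_n \subseteq \Vsf(n)$ and $X_n \subseteq \Wsf(n)$ into equalities. A secondary point to get exactly right is the implication in condition (3) that ``every pair focused'' forces ``no pair colliding,'' which is precisely where suffix-freeness, and no other property, is used.
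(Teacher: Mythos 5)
Your proposal is correct and follows essentially the same route as the paper's own proof: rule out condition (1) by suffix-freeness, dispatch condition (2) via Proposition~\ref{prop:unique_all_colliding}, and for condition (3) observe that focused pairs cannot be colliding so that Proposition~\ref{prop:unique_no_colliding} applies. The only difference is that you make explicit the bridge from containment-maximality to cardinality-maximality (using that $\Vsf(n)$ and $\Wsf(n)$ are realizable), a step the paper leaves implicit in its phrase ``a largest transition semigroup must be $\Vsf$''.
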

\begin{proof}
Obviously, there are no conflicting pair of transformations because of (1).
If a pair of transformations conflicts because of (2), then from Proposition~\ref{prop:unique_all_colliding} a largest transition semigroup must be $\Vsf$.
If a pair of transformations conflicts because of (3), then no pair of states is colliding, since it is focused; from Proposition~\ref{prop:unique_no_colliding} a largest transition semigroup must be $\Wsf$.
\end{proof}

A transformation $t$ of $Q$ is called \emph{semiconstant} if it maps a non-empty subset $S \subseteq Q$ to a single state $q \in Q$, and fixes $Q \setminus S$. We denote it by $(S \to q)$.

\begin{lemma}\label{lem:semiconstant_in_suffix-free}
In the transition semigroup of a DFA $\cD_n=(Q, \Sigma, \delta, 0,F)$ of a suffix-free language $L$ all semiconstant transformations $t=(S \to q)$ are such that $0 \in S$ and $q = n-1$. Moreover, every such transformation is present if the transition semigroup is maximal.
\end{lemma}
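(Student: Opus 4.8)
The plan is to prove the two assertions in turn. For the first, I would take any semiconstant $t=(S\to q)$ in $T(n)$ and use only the constraints of Lemma~\ref{lem:sf} (equivalently, $t\in\Bsf(n)$). Its image is $\{q\}\cup(Q\setminus S)$; since no transformation of $T(n)$ sends any state to $0$, we must have $0\notin\{q\}\cup(Q\setminus S)$, which forces $q\neq 0$ and $0\in S$. Moreover $q$ is always a fixed point of $t$: if $q\in S$ then $qt=q$ by the definition of $(S\to q)$, and if $q\notin S$ then $q$ is fixed. Hence, as $0\in S$, the $0$-path of $t$ is $0\to q\to q\to\cdots$ and stabilises at $q$; by Lemma~\ref{lem:sf}~(4) it must end at $n-1$, so $q=n-1$.

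For the \emph{moreover} part I would argue by maximality, showing that all admissible semiconstant transformations can be adjoined at once without leaving the class of suffix-free transition semigroups. Let $C=\{(S\to n-1)\mid 0\in S\subseteq Q\}$. A one-line computation gives $(S_1\to n-1)(S_2\to n-1)=(S_1\cup S_2\to n-1)$, so $C$ is a subsemigroup of $\Bsf(n)$. Assuming $T(n)$ is a largest transition semigroup of a suffix-free language, I would form $\langle T(n)\cup C\rangle$ and show it is again the transition semigroup of a suffix-free language; maximality then forces $C\subseteq T(n)$, which together with the first half identifies exactly the semiconstant transformations present.

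The heart of the argument is a normal form for $\langle T(n)\cup C\rangle$: every element $u$ either lies in $C$ or can be written as ``$\tau$ with some states redirected to $n-1$'', meaning there is $\tau\in T(n)$ (the ordered product of the $T(n)$-factors of a word representing $u$) and a set $A\subseteq Q$ with $xu=x\tau$ for $x\notin A$ and $xu=n-1$ for $x\in A$. This follows by induction on word length, since each factor from $C$ only sends states to $n-1$ or fixes them; the same reasoning gives $u^{j}=\tau^{j}$ with an enlarged redirection set. Granting this, membership $u\in\Bsf(n)$ is routine: $0\notin Qu$ because $Qu\subseteq Q\tau\cup\{n-1\}$ and $0\notin Q\tau$; $(n-1)u=n-1$; and the aperiodicity and non-collision condition transfers from $\tau\in\Bsf(n)$ because for every $j$ each $qu^{j}$ equals $q\tau^{j}$ or $n-1$.

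It then remains to check that $\langle T(n)\cup C\rangle$ focuses no colliding pair. If some $u$ mapped a colliding pair $\{a,b\}$ of middle states to a common middle state $r$, then by the normal form $a,b\notin A$ and $a\tau=b\tau=r\neq n-1$, so $\tau\in T(n)$ already focuses $\{a,b\}$ to $r$; and a collision witness for $\{a,b\}$, being of the same form, yields $\tau'\in T(n)$ with $0\tau'=a$ and $\rho\tau'=b$ for some middle $\rho$, so $\{a,b\}$ already collides in $T(n)$. Then $T(n)$ itself would focus a colliding pair, contradicting suffix-freeness. Hence $\langle T(n)\cup C\rangle$ is a suffix-free transition semigroup containing $T(n)$, so by maximality it equals $T(n)$ and $C\subseteq T(n)$. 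I expect the normal-form step to be the main obstacle, precisely because $\Bsf(n)$ is not a semigroup, so one cannot simply compose membership; the redirection-set bookkeeping is what makes the $\Bsf(n)$ and focusing checks go through.
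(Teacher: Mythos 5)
Your proof of the first assertion is correct and is a nice alternative to the paper's: where you invoke Lemma~\ref{lem:sf}~(4) (the $0$-path of $t$ stabilises at the fixed point $q$, so $q=n-1$), the paper argues at the level of words (if $q\neq n-1$ then $q$ is a non-empty fixed point, and a word $x$ accepted from $q$ gives $tx,ttx\in L$). The genuine gap is in the last step of your \emph{moreover} argument. You establish that every element of $\langle T(n)\cup C\rangle$ lies in $\Bsf(n)$ and that no colliding pair is focused, and from this you conclude ``hence $\langle T(n)\cup C\rangle$ is a suffix-free transition semigroup.'' But those two conditions are only \emph{necessary} for a transition semigroup of a suffix-free language; nowhere in the paper (or in your argument) is their sufficiency proved, and in fact the implication is false in general, because suffix-freeness also depends on the set of final states. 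Concretely, take $n=4$, a single letter $a\colon[1,2,3,3]$, and $F=\{1,2\}$: the transition semigroup $\{a,a^2,a^3\}$ is contained in $\Bsf(4)$, the unique colliding pair $\{1,2\}$ is never focused, yet the accepted language $\{a,aa\}$ is not suffix-free. So the bridge from your two semigroup-level checks to ``suffix-free'' is a missing idea, not a routine step.

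The good news is that your own normal form repairs the gap in a few lines, provided you argue about the actual language rather than the abstract semigroup: extend $\cD_n$ by a letter for each element of $C$, keeping $F$ (minimality is preserved, and $0,n-1\notin F$ since $n\ge 4$ and $K_{n-1}=\emp$). If $v$ and $uv$ with $u$ nonempty were both accepted, then by the normal form the transformations $t_v$ and $t_{uv}$ send $0$ either to $n-1$ (rejected, since accepted states are never redirected) or to $0\tau'$ and $0\tau_u\tau'$ respectively, where $\tau',\tau_u\in T(n)$ are induced by the subwords of original letters ($\tau_u$ by a nonempty one; if $u$ contains no original letters then $t_u\in C$ and $0t_ut_v=n-1\notin F$). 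These subwords then witness a violation of suffix-freeness of $L$ itself---a contradiction. With that patch your route is sound and genuinely different from the paper's: you adjoin all of $C$ at once and control the generated semigroup via the redirection normal form, whereas the paper adds one semiconstant transformation at a time and proves suffix-freeness of the extended language by taking a shortest counterexample $v, uv$ and deleting the occurrences of the new letter (each such occurrence must fix the state reached, since the word is accepted), producing a shorter counterexample over the enlarged alphabet. The paper's surgery is more elementary; your version buys a single uniform closure argument, but only once the final-state analysis above is made explicit.
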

\begin{proof}
If $0 \not\in S$, then $0t=0$; this implies that $t$ is not in $\mathrm{\mathbf{B}_{sf}}(n)$, contradicting our assumption that  $L$ is suffix-free.

If $0 \in S$ and $q \neq n-1$, then $0t=q$ and $q$ is a fixed point in $\{1,\ldots,n-2\}$.
Since $q$ is non-empty, it accepts some word $x$; then $tx$ and $ttx$ are both in $L$ contradicting that $L$ is suffix-free.

Thus we must have  $0 \in S$ and $q = n-1$.
We now argue that every such transformation $t$ must be present in a maximal semigroup by showing that $t$ can always be added if not present.
Suppose that the addition of a letter $a$ that induces $t$ results in a DFA that does not accept a suffix-free language.
Then there must be two words $v$ and $uv$ in the language, such that either $u$ or $v$ contains $a$; let $u$ and $v$ be such that $uv$ is a shortest such word.
Suppose $v=v_1av_2$; since $t$ maps some states to $n-1$ and fixes all the others and  $v$ is accepted, $a$ must map $\delta(0,v_1)$ to itself, and hence can be removed, leaving $v'=v_1v_2$ in $L$.
Similarly, if $u=u_1au_2$ we can remove $a$, obtaining $u'=u_1u_2$. Then $u'v'$ and $v'$ are in 
$L$, and $u'v'$ is shorter than $uv$---a contradiction.
\end{proof}

\begin{theorem}
For $n = 6$, the maximal transition semigroup of DFAs of suffix-free languages is $\Wsf(6)$ and it is unique.
For $n \le 5$, the maximal transition semigroup of DFAs of suffix-free languages is $\Vsf(n)$ and it is unique.
\end{theorem}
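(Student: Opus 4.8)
The plan is to reduce the statement to the dichotomy furnished by Lemma~\ref{lem:largest_other} and then to settle the two extremal possibilities by a bare cardinality comparison, isolating a single ``intermediate'' configuration that must be excluded. The cases $n\le 3$ are immediate, since there $\Bsf(n)$ is itself a semigroup and coincides with both $\Vsf(n)$ and $\Wsf(n)$, so it is trivially the unique largest one; hence I fix $n\in\{4,5,6\}$ and let $X_n$ be a largest transition semigroup of a minimal DFA of a suffix-free language. From~\cite{BLY12} the maximum cardinality is $|\Vsf(n)|$ for $n\in\{4,5\}$ (equal to $13$ and $73$) and $|\Wsf(6)|=5^4+4=629$; moreover $|\Wsf(4)|=11<13$, $|\Wsf(5)|=67<73$, and $|\Vsf(6)|<629$. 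Thus the two candidate witnesses never have equal size, and it suffices to prove that $X_n$ equals whichever of $\Vsf(n)$, $\Wsf(n)$ is the larger for the given $n$.

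First I would invoke Lemma~\ref{lem:largest_other}: either $X_n=\Vsf(n)$, or $X_n=\Wsf(n)$, or $X_n$ contains no conflicting pair of transformations. In the first two cases the conclusion follows at once from the inequalities above, because a \emph{largest} semigroup cannot coincide with the strictly smaller of the two candidates. So the entire problem collapses to the third alternative, and there I would split according to the collision pattern on the middle states $\{1,\dots,n-2\}$. If every such pair is colliding, then Proposition~\ref{prop:unique_all_colliding} (applicable since $X_n$ is maximal) forces $X_n=\Vsf(n)$; if no such pair is colliding, Proposition~\ref{prop:unique_no_colliding} forces $X_n=\Wsf(n)$. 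Since $\Vsf(n)$ has all middle pairs colliding and $\Wsf(n)$ has none, the only genuinely new possibility is a \emph{mixed} semigroup in which some middle pair collides while another middle pair can be focused; this is exactly the case that falls under ``no conflicting pair'' and is neither extreme.

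To exclude the mixed case I would show $|X_n|$ is strictly below the maximum. Here Lemma~\ref{lem:semiconstant_in_suffix-free} pins down a large common core: every semiconstant $(S\to n-1)$ with $0\in S$ must lie in $X_n$, which already accounts for the $n-2$ transformations that send $0$ into $Q\setminus\{0,n-1\}$ and everything else to $n-1$, and it severely restricts the remaining transformations once both a colliding pair and a focusable pair are present. One then counts the transformations compatible with this simultaneous presence and checks, for each admissible collision pattern on the at most four middle states, that the total cannot reach $|\Vsf(n)|$ (for $n\in\{4,5\}$) or $|\Wsf(6)|$ (for $n=6$), which contradicts the maximality of $X_n$ and completes the uniqueness claim.

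The hard part will be exactly this last count. The uniform injective argument used for $n\ge 7$ --- the unnamed lemma preceding Corollary~\ref{cor:unique}, which manufactures an element of $\Wsf(n)\setminus\varphi(T(n))$ from a $3$-cycle on three spare middle states --- breaks down at $n=6$, because $Q\setminus\{0,p,r,n-1\}$ then contains only two states, leaving no room for the $3$-cycle; replacing it by a $2$-cycle would destroy the distinction from the Case~12 transformations. Consequently the mixed case has to be settled by a finer, $n$-specific analysis that enumerates the finitely many collision patterns on the small middle-state set and bounds the admissible extensions of the semiconstant core in each --- either by a bounded hand computation or by an exhaustive verification over the conflict-free supersets of that core --- to conclude that no mixed semigroup attains the extremal size.
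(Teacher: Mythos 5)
Your reduction is sound as far as it goes: the trichotomy from Lemma~\ref{lem:largest_other}, the disposal of the two extreme collision patterns via Propositions~\ref{prop:unique_all_colliding} and~\ref{prop:unique_no_colliding} (together with Lemma~\ref{lem:colliding0}), and the observation that the injective-mapping lemma preceding Corollary~\ref{cor:unique} needs three spare middle states and therefore fails for $n\le 6$, all match the paper's framework. But the decisive step --- showing that no \emph{mixed} semigroup (some middle pair colliding, some other pair focusable) attains $|\Vsf(n)|$ for $n\in\{4,5\}$ or $|\Wsf(6)|=629$ --- is exactly what you do not carry out: you close by saying it can be settled ``by a bounded hand computation or by an exhaustive verification,'' which is a promissory note, not a proof. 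This is not a minor omission. The paper's own proof of this theorem \emph{is} that exhaustive verification, and it is heavy: $|\Bsf(6)|=1169$, so the authors could not even enumerate candidate subsets naively; instead they enumerate non-isomorphic semiautomata by iterated alphabet extension following~\cite{KiSz2013}, prune with a conflict-graph matching bound built from Lemma~\ref{lem:largest_other} and with the semiconstant reduction of Lemma~\ref{lem:semiconstant_in_suffix-free}, and still must examine semiautomata with up to $9$ letters. Nothing in your sketch supplies the finite case analysis that would replace this machine search, so the argument is incomplete precisely at its crux.

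There is also a concrete technical error in the one structural ingredient you do offer for the mixed case. A semiconstant transformation $(S\to n-1)$ with $0\in S$ fixes every state outside $S$; consequently the core guaranteed by Lemma~\ref{lem:semiconstant_in_suffix-free} does \emph{not} contain the $n-2$ transformations of type (b) of Proposition~\ref{prop:sf_wit}, i.e.\ those of the form $(Q\setminus\{0\}\to n-1)(0\to p)$ with $p\in Q\setminus\{0,n-1\}$: these move $0$ to a middle state and are not semiconstant. So the ``large common core'' you count is misidentified, and the lemma gives no a priori reason that a mixed maximal semigroup contains those $n-2$ maps at all. To make your plan work you would have to either actually perform the finite enumeration (as the paper does, by computer) or produce a genuine counting argument for mixed semigroups over the correctly identified semiconstant core; as written, neither is present.
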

\begin{proof}
We have verified this with the help of computation.
We used the idea of conflicting pairs of transformations from~\cite[Theorem~20]{BLY12}, and 
we enumerated non-isomorphic DFAs using the approach 
of~\cite{KiSz2013}.
For $n=6$ there are $\mathrm{|\mathbf{B}_{sf}(6)|}=1169$ transformations that can be present in the transition semigroup of a DFA of a suffix-free language; so it is not possible to check all maximal subsets of $\mathrm{\mathbf{B}_{sf}}(6)$ in a naive way.
Our method is as follows.

Say that an automaton $\cD'(Q,\Sigma',\delta')$ is an \emph{extension} of a semiautomaton $\cD(Q,\Sigma,\delta)$ if it can be obtained from $\cD$ by adding letters, that is, if $\Sigma \subset \Sigma'$ and $\delta' \subset \delta$. The same concept can be extended to DFAs.
We start from the set $A_1$ of all non-isomorphic unary semiautomata with $n$-states.
Given a set $A_i$ of semiautomata over an $i$-ary alphabet, using~\cite{KiSz2013} we generate all of their non-isomorphic extensions $A_{i+1}$ over an $(i+1)$-ary alphabet.
To reduce the number of generated semiautomata and make the whole computation possible, we check for every semiautomaton whether there may exist an extension of it such that after adding initial and final states, the extension accepts a suffix-free language and its transition semigroup is a largest one but different from $\Wsf$ (if $n=6$) and $\Vsf$ (if $n \le 5$).
Also, we check whether the transition semigroup of the semiautomaton is irreducibly generated by the transitions of the letters, that is, all these transitions are required to generate the semigroup. Note that if a DFA $\cD$ has a transition semigroup whose generating set can be reduced, then there exists a DFA $\cD'$ over a smaller alphabet that has the same semigroup; moreover, $\cD'$ 
recognizes a suffix-free language if and only if $\cD$ does.

For every generated semiautomaton we test all selections of the initial state and the empty state, relabel them to $0$ and $n-1$, and check the resulting DFAs. If all the DFAs are rejected, then we skip the semiautomaton.
First, we check if the DFA accepts a suffix-free language (see~\cite{BSX10} for testing).
Then, we compute a rough bound on the maximal size of the transition semigroups that are different from $\Wsf$ and $\Vsf$ of extensions of the DFA.
If this is smaller than the syntactic complexity, we reject the DFA.
Extending the idea from~\cite[Theorem~20]{BLY12}, we compute the bound in the following way:
\begin{enumerate}
\item We compute the transition semigroup $X_n$ of the DFA.
\item For every transformation from $t \in B_{sf} \setminus X_n$ we check whether adding $t$ as a generator results in a DFA that accepts a suffix-free language, and neither all pairs of states are colliding nor all are focused.
Otherwise, by~Lemma~\ref{lem:largest_other} we  can omit it.
Let $Y_n$ be the set of allowed transformations.
Note that $|X_n|+|Y_n|$ gives a rough bound for the size of the maximal transition semigroups.
\item We compute a matching $M$ in the graph of conflicts of transformations induced by $Y_n$:
Two transformations $t,t' \in Y_n$ can be matched if they conflict.
We compute it by a simple greedy algorithm in $O(|Y_n|^2)$ time.
\item We finally use $|X_n|+|Y_n|-|M|$ as the bound.
\end{enumerate}

Finally, by Lemma~\ref{lem:semiconstant_in_suffix-free} we remove semiconstant transformations from the set $A_1$ of transformations  that is used to generate non-isomorphic semiautomata. Then we always add all allowed semicontant transformations to $X_n$ in step~(1).

Using this method, for $n=6$ we had to verify semiautomata only up to $9$ letters, and the computation took a few minutes.
\end{proof}

\section{Conclusions}

We have shown that the upper bound on the syntactic complexity of suffix-free languages is $(n-1)^{n-2}+n-2$ for $n \ge 6$. Since it was known that this is also a lower bound, our result settles the problem. Moreover, we have proved that an alphabet of at least five letters is necessary to reach the upper bound, and that the maximal transition semigroups are unique for every $n$.

\section*{Acknowledgements}
          
This work was supported by the Natural Sciences and Engineering Research Council of Canada (NSERC)
grant No.~OGP000087, and by Polish NCN grant 2014/15/B/ST6/00615.

\bibliographystyle{splncs03}
\providecommand{\noopsort}[1]{}

\end{document}